\documentclass[twocolumn,english,aps,prd,nofootinbib]{revtex4}
\usepackage{lmodern}
\usepackage[T1]{fontenc}
\usepackage[latin9]{inputenc}
\setcounter{secnumdepth}{3}
\usepackage{xcolor}
\usepackage{babel}
\usepackage{varioref}
\usepackage{enumitem}
\usepackage{amsmath}
\usepackage{amsthm}
\usepackage{amssymb}
\usepackage{graphicx}
\usepackage[unicode=true]
 {hyperref}

\makeatletter

\newcommand{\lyxdot}{.}

\@ifundefined{textcolor}{}
{%
 \definecolor{BLACK}{gray}{0}
 \definecolor{WHITE}{gray}{1}
 \definecolor{RED}{rgb}{1,0,0}
 \definecolor{GREEN}{rgb}{0,1,0}
 \definecolor{BLUE}{rgb}{0,0,1}
 \definecolor{CYAN}{cmyk}{1,0,0,0}
 \definecolor{MAGENTA}{cmyk}{0,1,0,0}
 \definecolor{YELLOW}{cmyk}{0,0,1,0}
}
\theoremstyle{plain}
\newtheorem{thm}{\protect\theoremname}
\theoremstyle{plain}
\newtheorem{lem}[thm]{\protect\lemmaname}
\ifx\proof\undefined
\newenvironment{proof}[1][\protect\proofname]{\par
	\normalfont\topsep6\p@\@plus6\p@\relax
	\trivlist
	\itemindent\parindent
	\item[\hskip\labelsep\scshape #1]\ignorespaces
}{%
	\endtrivlist\@endpefalse
}
\providecommand{\proofname}{Proof}
\fi
\theoremstyle{plain}
\newtheorem{cor}[thm]{\protect\corollaryname}
\theoremstyle{remark}
\newtheorem{rem}[thm]{\protect\remarkname}
\newlist{casenv}{enumerate}{4}
\setlist[casenv]{leftmargin=*,align=left,widest={iiii}}
\setlist[casenv,1]{label={{\itshape\ \casename} \arabic*.},ref=\arabic*}
\setlist[casenv,2]{label={{\itshape\ \casename} \roman*.},ref=\roman*}
\setlist[casenv,3]{label={{\itshape\ \casename\ \alph*.}},ref=\alph*}
\setlist[casenv,4]{label={{\itshape\ \casename} \arabic*.},ref=\arabic*}
\theoremstyle{plain}
\newtheorem{conjecture}[thm]{\protect\conjecturename}


\usepackage{enumerate}

\makeatother

\providecommand{\casename}{Case}
\providecommand{\conjecturename}{Conjecture}
\providecommand{\corollaryname}{Corollary}
\providecommand{\lemmaname}{Lemma}
\providecommand{\remarkname}{Remark}
\providecommand{\theoremname}{Theorem}

\begin{document}
\title{Beyond Schwarzschild-de Sitter spacetimes: II. $\ $An exact non-Schwarzschild
metric \vskip2pt in pure $R^{2}$ gravity and new anomalous properties
of $R^{2}$ spacetime}
\author{Hoang Ky Nguyen$\,$}
\email[\ \,Email:\ \ ]{hoang.nguyen@ubbcluj.ro}

\affiliation{Department of Physics, Babe\c{s}--Bolyai University, Cluj-Napoca
400084, Romania}
\date{\today}
\begin{abstract}
\vskip2pt In a recent publication \textcolor{purple}{{[}\href{https://journals.aps.org/prd/abstract/10.1103/PhysRevD.106.104004}{Phys. Rev. D 106, 104004 (2022)}}{]},
we advanced a program that Buchdahl originated but prematurely abandoned
circa 1962 \textcolor{purple}{{[}\href{https://link.springer.com/article/10.1007/BF02733549}{Nuovo Cimento, Vol. 23, No 1, 141 (1962)}}{]}.
Therein we obtained an exhaustive class of metrics that constitute
the branch of non-trivial solutions to the pure $\mathcal{R}^{2}$
field equation in vacuo. The Buchdahl-inspired metrics in general
possess \emph{non-constant} scalar curvature, thereby defeating the
generalized Lichnerowicz theorem advocated in \citep{Nelson-2010,Lu-2015-a,Lu-2015-b,Luest-2015-backholes}.
We found that the said theorem makes an overly strong assumption on
the asymptotic falloff in the spatial derivatives of the Ricci scalar,
rendering it violable against the Buchdahl-inspired metrics. In this
paper, we shall further extend our work mentioned above \citep{Nguyen-2022-Buchdahl}
by showing that, within the class of Buchdahl-inspired metrics, the
\emph{asymptotically flat} member takes on the following \emph{exact
closed analytical} expression
\[
ds^{2}=\left|1-\frac{r_{\text{s}}}{r}\right|^{\frac{k}{r_{\text{s}}}}\left\{ -\left(1-\frac{r_{\text{s}}}{r}\right)dt^{2}+\left(1-\frac{r_{\text{s}}}{r}\right)^{-1}\frac{\rho^{4}(r)}{r^{4}}\,dr^{2}+\rho^{2}(r)\,d\Omega^{2}\right\} 
\]
in which the \emph{areal} coordinate $\rho$ is related to the radial
coordinate $r$ per
\[
\rho(r)=\frac{\zeta\,r_{\text{s}}\left|1-\frac{r_{\text{s}}}{r}\right|^{\frac{1}{2}\left(\zeta-1\right)}}{\left|1-\text{sgn}\Bigl(1-\frac{r_{\text{s}}}{r}\Bigr)\left|1-\frac{r_{\text{s}}}{r}\right|^{\zeta}\right|};\ \ \ \ \ \zeta:=\sqrt{1+3k^{2}/r_{\text{s}}^{2}}
\]
The \emph{special} Buchdahl-inspired metric, as we shall call it as
such hereafter, is characterized by a ``Schwarzschild'' radius $r_{\text{s}}$
and the Buchdahl parameter $k$, the latter of which arises via the
higher-derivative nature of $\mathcal{R}^{2}$ gravity. The case $k=0$
corresponds precisely to the classic Schwarzschild metric. Equipped
with this exact expression, we shall investigate pure $\mathcal{R}^{2}$
spacetime structures. The asymptotically flat spacetime is split into
an interior region and an exterior region, with the boundary situated
at $\rho=r_{\text{s}}$. We find that, except for $k=0$ and $k=-r_{\text{s}}$,
the Kretschmann invariant of this metric exhibits an additional singularity
at the interior-exterior boundary. Accordingly, the surface area of
the interior-exterior boundary is found to vanish for $k\in(-\infty,-r_{\text{s}})\cup(0,+\infty)$,
diverge for $k\in(-r_{\text{s}},0)$, equal $4\pi r_{\text{s}}^{2}$
for $k=0$, and equal $16\pi r_{\text{s}}^{2}$ for $k=-r_{\text{s}}$.
This behavior signals a naked singularity or a wormhole. We shall
also analytically construct the Kruskal-Szekeres (KS) diagram for
pure $\mathcal{R}^{2}$ spacetime. The Buchdahl parameter $k$ is
found to modify the KS diagram in some fundamental way. A striking
result is that the (modified) KS diagram develops a ``gulf'' that
sandwiches between the four established quadrants. The ``gulf''
resides strictly on the interior-exterior boundary and does not correspond
to any domain in the physical spacetime, specified by $\left(t,r,\theta,\phi\right)$.
The nature of this novel ``virtual'' region in the KS diagram is
an open question, related to which we make a conjecture on a possible
path forward.
\end{abstract}
\maketitle

\section{\label{sec:Introduction}Introduction: $\ $Buchdahl's $\boldsymbol{1962}$
program in pure $\mathcal{R}^{2}$ gravity}

Pure $\mathcal{R}^{2}$ gravity is among the simplest candidates for
modified gravity. Its action contains a single term, $\frac{1}{2\kappa}\int d^{4}x\sqrt{-g}\,\mathcal{R}^{2}$,
with $\kappa$ being a dimensionless parameter, while the traditional
Einstein-Hilbert term is suppressed. The theory was considered as
early as the 1960's by Buchdahl as a parsimonious prototype of higher-order
gravity that possesses an additional symmetry -- the scale invariance
\citep{Buchdahl-1962}. There is a surge of interest in the pure $\mathcal{R}^{2}$
action of late \citep{AlvarezGaume-2015,Alvarez-2018,Stelle-1977,Edery-2014}
within a larger context of modified gravity \citep{Capozziello-2011,Clifton-2011,deFelice-2010,Sotiriou-2008,Nojiri-2011,Nojiri-2017}.
Pure $\mathcal{R}^{2}$ gravity is the only theory that is both ghost-free
and scale invariant \citep{Stelle-1978,Luest-2015-fluxes}. \vskip4pt

In a seminal -- yet obscure -- 1962 Nuovo Cimento paper entitled
\emph{``On the Gravitational Field Equations Arising from the Square
of the Gaussian Curvature''} \citep{Buchdahl-1962}, Buchdahl pioneered
a program in search of static spherically symmetric vacua for pure
$\mathcal{R}^{2}$ gravity. He established therein that the vacua
in general possess \emph{non-constant} scalar curvature, as a result
of the higher-derivative structure of the theory. Surpassing several
obstacles, his efforts culminated in a non-linear second-order ordinary
differential equation (ODE) \emph{which required being solved}. The
finish line was within \emph{his} striking distance: the $\mathcal{R}^{2}$
vacua Buchdahl sought after hinged on the analytical solution --
yet to be found in his time -- to the ODE he derived. Unfortunately,
Buchdahl deemed his ODE intractable and prematurely suspended his
pursuit for an analytical solution. Until our recent work \citep{Nguyen-2022-Buchdahl},
his ODE had remained untackled; and to this day, his Nuovo Cimento
paper has largely gone unnoticed by the gravitation research community
\footnote{Buchdahl's paper has gathered merely $40+$ citations since its publications
in 1962, according to NASA ADS and InpireHEP citation trackers. Yet,
none of these citations attempted to solve Buchdahl's ODE.}. \vskip4pt

Recently, we have managed to bridge the remaining gap in the Buchdahl
program by identifying a compact solution to his ODE \citep{Nguyen-2022-Buchdahl}.
With this impasse finally overcome, we proceeded to accomplishing
Buchdahl's ultimate goal. The outcome is an exhaustive class of pure
$\mathcal{R}^{2}$ vacua expressible in a compact form, which we called
\emph{the Buchdahl-inspired solution}, to be summarized below. \vskip4pt

\subsubsection*{\textbf{The Buchdahl-inspired solution}}

\begin{figure}[t]
\noindent \begin{centering}
\includegraphics[scale=0.8]{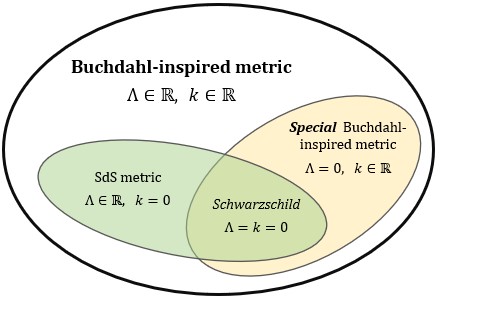}
\par\end{centering}
\noindent \begin{centering}
\includegraphics[scale=0.8]{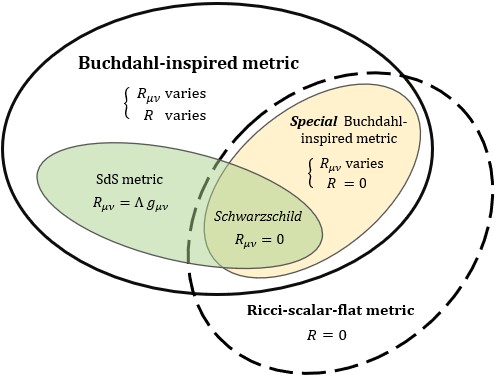}
\par\end{centering}
\caption{\label{fig:Buchdahl-inspired-metric-family}Upper panel: the Buchdahl-inspired
metric family and its subsets. Lower panel: their relation with the
Ricci-scalar-flat family. The \emph{special} Buchdahl-inspired metric
in the intersection is asymptotically flat, whereas the Buchdahl-inspired
metric with $\Lambda\protect\neq0$ is asymptotically constant.}
\end{figure}

In \citep{Nguyen-2022-Buchdahl} by reformulating Buchdahl's original
derivation which was quite cumbersome, we obtained the Buchdahl-inspired
metric, cast in a parallel resemblance to the classic Schwarzschild-de
Sitter (SdS) metric, per\vskip-10pt

\small
\begin{equation}
ds^{2}=e^{k\int\frac{dr}{r\,q(r)}}\left\{ p(r)\Bigl[-\frac{q(r)}{r}dt^{2}+\frac{r}{q(r)}dr^{2}\Bigr]+r^{2}d\Omega^{2}\right\} \label{eq:B-metric-1}
\end{equation}
\normalsize The pair of functions $\{p(r),q(r)\}$ obey the ``evolution''
rules\small
\begin{align}
{\displaystyle \frac{dp}{dr}}\, & ={\displaystyle \,\frac{3k^{2}}{4\,r}\frac{p}{q^{2}}}\label{eq:B-metric-2}\\
{\displaystyle {\displaystyle \frac{dq}{dr}}}\, & =\,{\displaystyle \bigl(1-\Lambda\,r^{2}\bigr)\,p}\label{eq:B-metric-3}
\end{align}
\normalsize and the \emph{non-constant} Ricci scalar equals to
\begin{equation}
\mathcal{R}(r)=4\Lambda\,e^{-k\int\frac{dr}{r\,q(r)}}\label{eq:B-metric-4}
\end{equation}
This metric is specified by two parameters, $\Lambda$ and $k$, resulted
from the fourth-derivative nature of $\mathcal{R}^{2}$ gravity, a
theory that requires two additional boundary conditions as compared
with second-derivative theories, such as the Einstein-Hilbert action.
If the spacetime structures associated with this metric are proven
to be stable, then $k$ would stand for new higher-derivative hair
which allows the Ricci scalar to vary on the manifold, per Eq. \eqref{eq:B-metric-4}.
At largest distances, the Ricci scalar converges to $4\Lambda$, characterizing
an asymptotically constant spacetime.\vskip4pt

To allay any lingering doubt, in \citep{Nguyen-2022-Buchdahl} and
\citep{Shurtleff-2022} the current author and Shurtleff independently
checked that the solution given in Eqs. \eqref{eq:B-metric-1}--\eqref{eq:B-metric-4}
satisfies the pure $\mathcal{R}^{2}$ vacuo field equation

\begin{equation}
\mathcal{R}\Bigl(\mathcal{R}_{\mu\nu}-\frac{1}{4}g_{\mu\nu}\mathcal{R}\Bigr)+\Bigl(g_{\mu\nu}\,\square-\nabla_{\mu}\nabla_{\nu}\Bigr)\mathcal{R}=0\label{eq:field-eqn}
\end{equation}
for all values of $\Lambda\in\mathbb{R}$ and $k\in\mathbb{R}$, thereby
affirming its validity. We must stress that the solution presented
above is able to \emph{defeat} the generalized Lichnerowicz theorem
advocated in \citep{Nelson-2010,Lu-2015-a,Lu-2015-b,Luest-2015-backholes}
by evading an overly strong condition on the asymptotic falloff in
$D_{i}\mathcal{R}$ assumed in the theorem; see our companion papers
in this ``Beyond Schwarzschild--de Sitter spacetimes'' series for
a detailed exposition \citep{Nguyen-2022-Buchdahl,Nguyen-2022-extension}.
\vskip4pt

The most crucial element of the metric is the new (Buchdahl) parameter
$k$ which makes the metric \emph{non-Schwarzschild}. At $k=0$, the
Buchdahl-inspired metric duly recovers the SdS metric. To see this,
at $k=0$ the evolution rules \eqref{eq:B-metric-2}--\eqref{eq:B-metric-3}
admit the solution $p(r)\equiv1$ and $q(r)=r-\frac{\Lambda}{3}r^{3}-r_{\text{s}}$,
with $r_{\text{s}}$ being a constant, upon which metric \eqref{eq:B-metric-1}
is readily brought into the SdS form with a constant curvature $\mathcal{R}=4\Lambda$
everywhere. A non-zero value of $k$ would trigger a non-linear interplay
between $p(r)$ and $q(r)$ per Eqs. \eqref{eq:B-metric-2}--\eqref{eq:B-metric-3}
and enable a \emph{non-constant} curvature to manifest, per Eq. \eqref{eq:B-metric-4}.\vskip4pt

The relations between the Buchdahl-inspired metric and the SdS metric
as well as the null-Ricci-scalar spaces are depicted by the Venn diagrams
in Fig. \ref{fig:Buchdahl-inspired-metric-family}. By \emph{superseding}
the SdS metric, the Buchdahl-inspired spacetime is a bona fide enlargement
of the SdS spacetime, suitably regarded as a framework \emph{``beyond
Schwarzschild--de Sitter}'' \citep{Nguyen-2022-Buchdahl}.

\subsubsection*{\textbf{The curious case of $\Lambda=0$}}

\textcolor{black}{Also shown in Fig. \ref{fig:Buchdahl-inspired-metric-family}
is the }\textcolor{black}{\emph{special}}\textcolor{black}{{} Buchdahl-inspired
metric which is the Buchdahl-inspired metric with $\Lambda$ set equal
to zero. This special metric wholly occupies the intersection of the
branch of (non-trivial) Buchdahl-inspired metrics and the branch of
(trivial) null-Ricci-scalar spaces. \vskip4pt}

\textcolor{black}{Surprisingly, despite being non-linear, the evolution
rules \eqref{eq:B-metric-2} and \eqref{eq:B-metric-3} are }\textcolor{black}{\emph{fully
soluble}}\textcolor{black}{{} for $\Lambda=0$. In this paper we shall
exploit this advantage to derive a }\textcolor{black}{\emph{closed
analytical}}\textcolor{black}{{} expression for the }\textcolor{black}{\emph{special}}\textcolor{black}{{}
Buchdahl-inspired metric.\vskip4pt}

Equipped with this exact analytical solution, we then are empowered
to investigate the properties of $\mathcal{R}^{2}$ spacetime structures
that live on an asymptotically flat background. These structures are
described by the \emph{special} Buchdahl-inspired metric.\vskip4pt
\begin{center}
-----------------$\infty$-----------------
\par\end{center}

\textcolor{black}{Our paper is organized in four major sections. Sec.
\ref{sec:Derivation} is devoted to deriving the }\textcolor{black}{\emph{special}}\textcolor{black}{{}
Buchdahl-inspired metric.} Sec. \ref{sec:Application-I} produces
a number of surprising properties in the Kretschmann invariant and
the surface area of the interior-exterior boundary of $\mathcal{R}^{2}$
spacetime structures. Sec. \ref{sec:Application-II} analytically
constructs a \emph{modified} Kruskal-Szekeres (KS) diagram for the
\emph{special} Buchdahl-inspired metric and uncovers yet a novel feature
of its KS diagram. Finally, Sec. \ref{sec:Summary} discusses the
potential implications of our finding in various areas in modified
gravity.

\section{\label{sec:Derivation}Derivation of the \emph{special} Buchdahl-inspired
metric}

This rather dense section derives the closed analytical solution in
step-by-step details, with Lemma \ref{lem:lem-final} being our ultimate
result. \textcolor{black}{We start with solving the evolution rules
\eqref{eq:B-metric-2}--\eqref{eq:B-metric-3} for $\Lambda=0$ in
Sec. \ref{subsec:Analytical-solution}. We then, in Sec. \ref{subsec:Problems},
expose the inadequacy of the standard Schwarzschild radial coordinate
$r$ for this metric, resulting in the need for a }\textcolor{black}{\emph{new}}\textcolor{black}{{}
radial coordinate. Secs. \ref{subsec:First-change} and \ref{subsec:Second-change}
introduce two coordinate transformations in sequel that lead to the
final solution, described in Sec. \ref{subsec:The-special-Buchdahl-inspired}. }

\subsection{\label{subsec:Analytical-solution}Analytical solution to the evolution
rules with $\Lambda=0$}

\begin{figure*}[!t]
\noindent \begin{centering}
\includegraphics[scale=0.85]{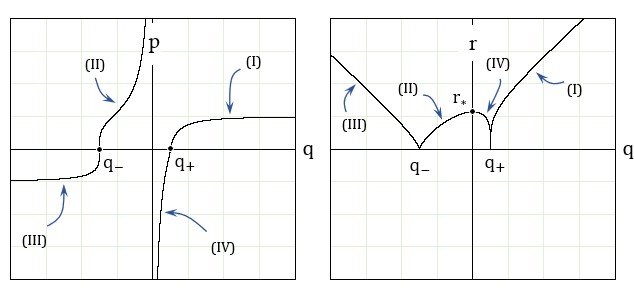}
\par\end{centering}
\noindent \begin{centering}
\includegraphics[scale=0.85]{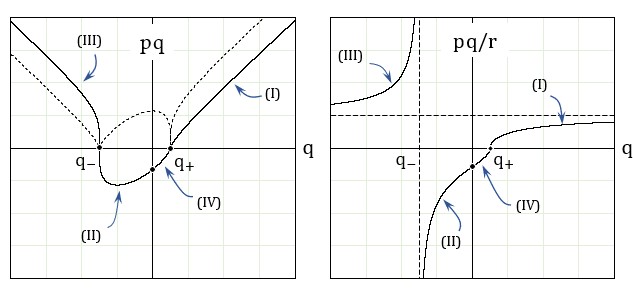}
\par\end{centering}
\caption{\label{fig:Plots-vs-q}Plots of $p$, $r$, $pq$ and $pq/r$ as functions
of $q$. Plots are for $r_{\text{s}}=1,\ k=r_{\text{s}}$. See Remark
\vref{rem:rem-plot-vs-q} for explanations.}
\end{figure*}

\begin{lem}
\noindent \label{lem:lemma-1}For $\Lambda=0$, the set of equations
\eqref{eq:B-metric-2}--\eqref{eq:B-metric-3} admits the following
solution:
\begin{align}
r & =\left|q-q_{+}\right|^{\frac{q_{+}}{q_{+}-q_{-}}}\left|q-q_{-}\right|^{-\frac{q_{-}}{q_{+}-q_{-}}}\label{eq:a-0a}\\
p & =\frac{(q-q_{+})(q-q_{-})}{r\,q}\label{eq:a-0b}\\
q_{\pm} & :=\frac{1}{2}\left(-r_{\text{s}}\pm\sqrt{r_{\text{s}}^{2}+3k^{2}}\right)\label{eq:a-0c}
\end{align}
with $r_{\text{s}}\in\mathbb{R}$ and $q_{\pm}$ representing the
two real roots of the algebraic equation
\begin{equation}
q^{2}+r_{\text{s}}\,q-\frac{3k^{2}}{4}=0\label{eq:a-0d}
\end{equation}
\vskip8pt
\end{lem}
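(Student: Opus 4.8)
The plan is to exploit the fact that, for $\Lambda=0$, Eq.~\eqref{eq:B-metric-3} degenerates to $dq/dr=p$, so that $p$ is literally the derivative of $q$ with respect to $r$. This observation lets me promote $q$ to the role of independent variable and regard $p$ as a function of $q$, collapsing the coupled first-order system into a single integrable equation. The guiding idea is to search for a \emph{first integral} of the pair \eqref{eq:B-metric-2}--\eqref{eq:B-metric-3}; the natural candidate, suggested by the factored form of \eqref{eq:a-0b}, is the product $rp$.

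First I would compute $d(rp)/dr$. Using the product rule together with \eqref{eq:B-metric-2}, one finds $d(rp)/dr = p\bigl(1+\tfrac{3k^{2}}{4q^{2}}\bigr)$. Dividing by $dq/dr=p$ then converts this into a derivative with respect to $q$, namely $d(rp)/dq = 1+\tfrac{3k^{2}}{4q^{2}}$, whose right-hand side depends on $q$ alone. This is the crux of the argument: the combination $rp$ obeys an exact, separable, single-variable relation. Integrating once yields $rp = q-\tfrac{3k^{2}}{4q}+C$ for an integration constant $C$, which I identify with $r_{\text{s}}$. Multiplying through by $q$ and using the Vieta relations $q_{+}+q_{-}=-r_{\text{s}}$ and $q_{+}q_{-}=-\tfrac{3k^{2}}{4}$ --- equivalently, recognizing $q^{2}+r_{\text{s}}q-\tfrac{3k^{2}}{4}$ as the quadratic \eqref{eq:a-0d} with roots \eqref{eq:a-0c} --- recasts the first integral precisely as \eqref{eq:a-0b}.

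With $p$ now expressed through $q$ and $r$, I would substitute it back into $dq/dr=p$ to obtain the separable equation $\tfrac{q\,dq}{(q-q_{+})(q-q_{-})}=\tfrac{dr}{r}$. A partial-fraction decomposition gives $\tfrac{q}{(q-q_{+})(q-q_{-})}=\tfrac{q_{+}/(q_{+}-q_{-})}{q-q_{+}}-\tfrac{q_{-}/(q_{+}-q_{-})}{q-q_{-}}$, and integrating both sides produces a weighted sum of logarithms. Exponentiating, and absorbing the remaining integration constant into a choice of scale for $r$, delivers \eqref{eq:a-0a} verbatim. Finally, I would verify by direct differentiation that the pair \eqref{eq:a-0a}--\eqref{eq:a-0b} indeed satisfies both \eqref{eq:B-metric-2} and \eqref{eq:B-metric-3}, closing the logical loop and confirming that no spurious branches were introduced when dividing by $p$ or by $q$.

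I anticipate that the only genuinely delicate point is the \emph{discovery} of the first integral $rp$; once that combination is recognized, every subsequent manipulation is a routine quadrature. A secondary bookkeeping concern is the careful handling of absolute values and signs when exponentiating across the zeros $q=q_{\pm}$, together with the positivity of $r_{\text{s}}^{2}+3k^{2}$ under the square root, which guarantees that $q_{\pm}$ are real and distinct for $k\neq0$; these considerations fix the regions in which the closed form \eqref{eq:a-0a} is valid.
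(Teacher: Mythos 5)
Your proposal is correct and is essentially the paper's own argument in different clothing: the first integral $rp$ you identify is exactly the quantity $dq/dx$ that the paper obtains after substituting $r=e^{x}$, and your relation $d(rp)/dq=1+\tfrac{3k^{2}}{4q^{2}}$ coincides with the paper's Eq.~\eqref{eq:a-2c}, after which both arguments perform the same quadrature, partial-fraction integration, and identification of $p$ via \eqref{eq:a-0b}. The only cosmetic difference is that the paper first forms the second-order ODE $q_{rr}=\tfrac{3k^{2}}{4r}q_{r}/q^{2}$ and reduces its order, whereas you extract the conserved combination directly from the first-order pair.
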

\begin{proof}
\noindent For $\Lambda=0$, the evolution rules \eqref{eq:B-metric-2}--\eqref{eq:B-metric-3}
become
\begin{align}
p_{r} & =\frac{3k^{2}}{4r}\frac{p}{q^{2}}\label{eq:a-1a}\\
q_{r} & =p\label{eq:a-1b}
\end{align}
which give
\begin{equation}
q_{rr}=\frac{3k^{2}}{4r}\frac{q_{r}}{q^{2}}\label{eq:a-1c}
\end{equation}
Upon a change of variable $r=e^{x}$:
\begin{align}
q_{r} & =\frac{dq}{dx}\frac{dx}{dr}=q_{x}\,e^{-x}\label{eq:a-1d}\\
q_{rr} & =\frac{d}{dx}\left(q_{x}e^{-x}\right)\frac{dx}{dr}=\left(q_{xx}-q_{x}\right)e^{-2x}\label{eq:a-1e}
\end{align}
Equation \eqref{eq:a-1c} becomes
\begin{equation}
q_{xx}=\left(1+\frac{3k^{2}}{4q^{2}}\right)q_{x}\label{eq:a-2a}
\end{equation}
which can be recast as
\begin{equation}
\frac{d}{dx}\left(\frac{dq}{dx}\right)=\left(1+\frac{3k^{2}}{4q^{2}}\right)\frac{dq}{dx}\label{eq:a-2b}
\end{equation}
or, equivalently
\begin{equation}
\frac{d}{dq}\left(\frac{dq}{dx}\right)=1+\frac{3k^{2}}{4q^{2}}\label{eq:a-2c}
\end{equation}
Upon integrating, it yields a first-order ODE
\begin{equation}
\frac{dq}{dx}=q-\frac{3k^{2}}{4q}+r_{\text{s}}\label{eq:a-2d}
\end{equation}
with $r_{\text{s}}$ being an integration constant. Let $q_{\pm}:=\frac{1}{2}\left(-r_{\text{s}}\pm\sqrt{r_{\text{s}}^{2}+3k^{2}}\right)$
be the two real roots of the algebraic equation \eqref{eq:a-0d}.
A further integration of \eqref{eq:a-2d}, with the integration constant
for $x$ set equal zero without loss of generality, produces \vskip-2pt

\noindent \small
\begin{align}
x & =\int\frac{q\,dq}{(q-q_{+})(q-q_{-})}\label{eq:a-3a}\\
 & =\frac{q_{+}}{q_{+}-q_{-}}\ln|q-q_{+}|-\frac{q_{-}}{q_{+}-q_{-}}\ln|q-q_{-}|\label{eq:a-3b}
\end{align}
\normalsize Restoring $r=e^{x}$, we then obtain
\begin{equation}
r=\left|q-q_{+}\right|^{\frac{q_{+}}{q_{+}-q_{-}}}\left|q-q_{-}\right|^{-\frac{q_{-}}{q_{+}-q_{-}}}\label{eq:a-3c}
\end{equation}
Additionally, from \eqref{eq:a-1b} and \eqref{eq:a-2d}, together
with $x=\ln r$, we have
\begin{align}
p & =q_{r}=q_{x}\frac{dx}{dr}=q_{x}\,\frac{1}{r}\label{eq:a-3e}\\
 & =\frac{1}{r}\left(q-\frac{3k^{2}}{4q}+r_{\text{s}}\right)\label{eq:a-3f}\\
 & =\frac{1}{rq}(q-q_{+})(q-q_{-})\label{eq:a-3g}
\end{align}
\end{proof}
\noindent Combining Lemma \ref{lem:lemma-1} with Eq. \eqref{eq:B-metric-1},
we arrive at the following analytical result.\vskip12pt
\begin{cor}
\noindent For $\Lambda=0$, the Buchdahl-inspired metric \eqref{eq:B-metric-1}--\eqref{eq:B-metric-3}
is fully analytic, per \vskip0pt

\noindent \small
\begin{align}
ds^{2} & =e^{k\int\frac{dr}{r(q)q}}\biggl\{-\frac{p(q)q}{r(q)}dt^{2}+\frac{p(q)r(q)}{q}dr^{2}+r^{2}(q)d\Omega^{2}\biggr\}\label{eq:a-4a}\\
r(q) & =\left|q-q_{+}\right|^{\frac{q_{+}}{q_{+}-q_{-}}}\left|q-q_{-}\right|^{-\frac{q_{-}}{q_{+}-q_{-}}}\label{eq:a-4b}\\
p(q) & =\frac{(q-q_{+})(q-q_{-})}{r(q)\,q}\label{eq:a-4c}\\
k & =\left(-\frac{4}{3}q_{+}q_{-}\right)^{1/2}\label{eq:a-4d}
\end{align}
\normalsize\vskip8pt
\end{cor}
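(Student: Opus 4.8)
The plan is to obtain the Corollary as an essentially immediate consequence of Lemma \ref{lem:lemma-1}. That lemma already supplies the two coefficient functions of the metric, $r$ and $p$, as explicit functions of a single parameter $q$ through Eqs. \eqref{eq:a-0a} and \eqref{eq:a-0b}. The crucial structural point is that $r=r(q)$ is transcendental and cannot be inverted to give $q=q(r)$ in elementary form; hence, rather than reinstating $q(r)$ inside the metric \eqref{eq:B-metric-1}, I would retain $q$ as the independent parameter and re-express each ingredient of \eqref{eq:B-metric-1} in terms of $q$.

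Concretely, the first step is to make the replacements $q(r)\to q$, $p(r)\to p(q)$ from Eq. \eqref{eq:a-0b}, and $r\to r(q)$ from Eq. \eqref{eq:a-0a} directly in \eqref{eq:B-metric-1}. Because the evolution rule $q_{r}=p$ of Eq. \eqref{eq:a-1b} ensures that $q$ is monotonic in $r$ on each branch, this reparametrization is legitimate and the differential $dr$ in \eqref{eq:a-4a} is left untouched. Collecting the metric coefficients then reproduces the terms $-p(q)q/r(q)$, $p(q)r(q)/q$, and $r^{2}(q)$ of Eqs. \eqref{eq:a-4a}--\eqref{eq:a-4c}.

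The claim that the metric is \emph{fully} analytic rests on the conformal prefactor. Here I would observe that $dr=dq/p$ together with Eq. \eqref{eq:a-0b} collapses the exponent to an elementary integral, $\int\frac{dr}{rq}=\int\frac{dq}{(q-q_{+})(q-q_{-})}$, which is evaluated by partial fractions; this is precisely the mechanism that renders the whole metric analytic in $q$. The remaining relation \eqref{eq:a-4d} follows from Vieta's formulas applied to the quadratic \eqref{eq:a-0d}: its product of roots is $q_{+}q_{-}=-3k^{2}/4$, so that $-\tfrac{4}{3}q_{+}q_{-}=k^{2}$ and $k=(-\tfrac{4}{3}q_{+}q_{-})^{1/2}$ upon taking the nonnegative branch.

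I do not expect a genuine obstacle, as the argument is a change of independent variable followed by an elementary partial-fraction integration and a Vieta identity. The only points warranting care are conceptual rather than computational: verifying that $q$ is a faithful single-valued parameter along each branch of the solution (monotonicity of $r(q)$, which degenerates near the roots $q=q_{\pm}$ where $p$ vanishes), and fixing the correct sign of the square root in Eq. \eqref{eq:a-4d} so that it is consistent with the defining sign convention for $k$.
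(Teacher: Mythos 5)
Your proposal is correct and matches the paper's route: the paper obtains this Corollary by directly combining Lemma \ref{lem:lemma-1} with the metric \eqref{eq:B-metric-1}, i.e.\ substituting $r(q)$ and $p(q)$ into the coefficients while leaving $dr$ as the differential, with \eqref{eq:a-4d} following from Vieta applied to \eqref{eq:a-0d}. Your extra observation that $\int\frac{dr}{rq}=\int\frac{dq}{(q-q_{+})(q-q_{-})}$ is the mechanism the paper defers to the proof of Corollary \ref{cor:first-transf}, but it is consistent with what is done there.
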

\begin{rem}
\noindent We shall call the Buchdahl-inspired metric with $\Lambda=0$
the \textbf{\emph{special}}\emph{ Buchdahl-inspired metric}. We shall
also choose a convention of $r_{\text{s}}>0$ in the rest of the paper.
The case of $r_{\text{s}}=0$ is considered in Appendix \ref{sec:Case-rS0}.
\end{rem}
\vskip6pt
\begin{rem}
\label{rem:rem-plot-vs-q}Using Eqs. \eqref{eq:a-4b} and \eqref{eq:a-4c},
we produce the plots of $p$, $r$, $pq$ and $pq/r$ against $q$,
as shown in Fig. \ref{fig:Plots-vs-q}.\linebreak The parameters
are $k=r_{\text{s}}=1$, \textcolor{black}{making $q_{+}=1/2$, $q_{-}=-3/2$,
and $r_{*}:=\left|q_{+}\right|^{\frac{q_{+}}{q_{+}-q_{-}}}\left|q_{-}\right|^{-\frac{q_{-}}{q_{+}-q_{-}}}=(27)^{1/4}/2\approx1.14$.
}In the upper left panel, the four quadrants of the ${p,q}$ diagram
are labeled (I), (II), (III), (IV) counterclockwise, respectively.
In the other three panels, the quadrant labels (as defined in the
$\{p,q\}$ plot) are attached accordingly.
\end{rem}
\vskip6pt
\begin{rem}
\noindent From Eq. \eqref{eq:a-4c}, it is straightforward to prove
that the \emph{special} Buchdahl-inspired metric supports a duality
relation:
\begin{equation}
q\,p(q)=r(q_{+}+q_{-}-q)\label{eq:duality}
\end{equation}
\end{rem}
\vskip6pt
\begin{rem}
Note that $q_{-}<0<q_{+}$, by virtue of their definitions in Eq.
\eqref{eq:a-0c}. The zeros of $r$ and $p$ occur at $q=q_{+}$ and
$q=q_{-}$. Furthermore,
\begin{align}
p & =\begin{cases}
\ >0 & \text{for }q\in(q_{-},0)\cup(q_{+},+\infty)\\
\ <0 & \text{for }q\in(-\infty,q_{-})\cup(0,q_{+})
\end{cases}\\
pq & =\begin{cases}
\ >0 & \text{for }q\in(-\infty,q_{-})\cup(q_{+},+\infty)\\
\ <0 & \text{for }q\in(q_{-},q_{+})
\end{cases}
\end{align}
\end{rem}
\vskip6pt
\begin{rem}
From the duality relation \eqref{eq:duality} and the definition of
$q_{\pm}$ in \eqref{eq:a-0c},
\begin{equation}
\frac{q\,p(q)}{r(q)}=\frac{r(q_{+}+q_{-}-q)}{r(q)}=\left|\frac{q-q_{+}}{q-q_{-}}\right|^{\frac{1}{\sqrt{1+3k^{2}/r_{\text{s}}^{2}}}}
\end{equation}
making $\frac{q\,p}{r}$ vanish as $q\rightarrow q_{+}$ and diverge
as $q\rightarrow q_{-}$. These behaviors account for the lower right
panel in Fig. \ref{fig:Plots-vs-q}.
\end{rem}
\vskip6pt
\begin{rem}
As $q\rightarrow0^{\pm}$, $r$ approaches $r_{*}:=\left|q_{+}\right|^{\frac{q_{+}}{q_{+}-q_{-}}}\left|q_{-}\right|^{-\frac{q_{-}}{q_{+}-q_{-}}}$,
whereas $p\rightarrow\mp\infty$, respectively. One can also show
that, for $q\in(q_{-},q_{+})$,
\begin{equation}
\frac{dr}{dq}=-q\left(q_{+}-q\right)^{\frac{q_{-}}{q_{+}-q_{-}}}\left(q-q_{-}\right)^{-\frac{q_{+}}{q_{+}-q_{-}}}
\end{equation}
forcing $r(q)$ to peak at $q=0$ in the interval $(q_{-},q_{+})$.
These behaviors explain the two upper panels in Fig. \ref{fig:Plots-vs-q}.
\end{rem}

\subsection{\label{subsec:Problems}Problems with the Schwarzschild radial coordinate
in $\mathcal{R}^{2}$ gravity}
\begin{center}
\begin{figure*}[!t]
\begin{centering}
\includegraphics[scale=0.68]{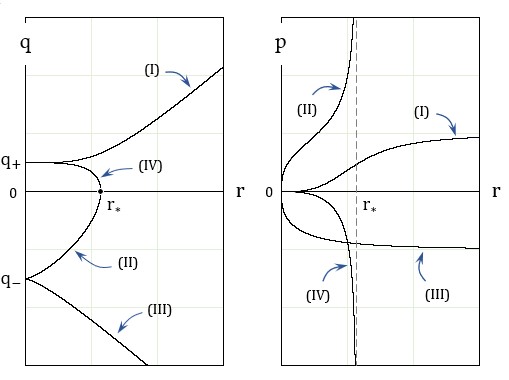}$\ $\includegraphics[scale=0.68]{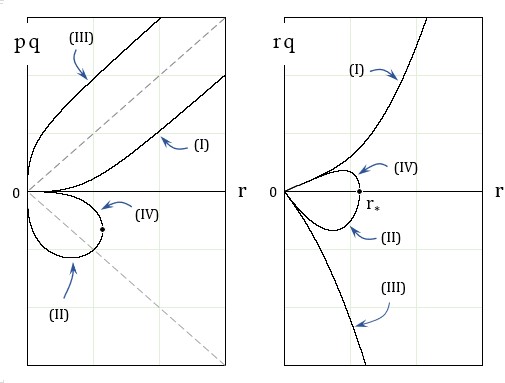}
\par\end{centering}
\caption{\label{fig:Plots-vs-r}Plots of $q$, $p$, $pq$ and $rq$ as functions
of $r$. Plots are for $r_{\text{s}}=1,\ k=r_{\text{s}}$. The leftmost
panel reveals the multi-valuedness problem for $q(r)$.}
\end{figure*}
\par\end{center}

\textcolor{black}{The generic Buchdahl-inspired metric \eqref{eq:B-metric-1}
is expressed in terms of the Schwarzschild coordinate system, $(t,r,\theta,\phi).$
This system would be problematic for metric \eqref{eq:a-4a}--\eqref{eq:a-4d}
however, as we shall see below.\vskip4pt}

\textcolor{black}{Despite Lemma \ref{lem:lemma-1} yielding the relation
$r(q)$, the inversion operation to express $q$ in terms of $r$
using elementary functions cannot be carried out. The reason is that
the two exponents, $\frac{q_{+}}{q_{+}-q_{-}}$ and $\frac{q_{-}}{q_{+}-q_{-}}$,
in \eqref{eq:a-0a} are ``out of sync'' with each other. This trouble
is further complicated by the multi-valuedness of $q(r)$.\vskip4pt}

\textcolor{black}{To see the multi-valuedness problem, we shall re-plot
Fig. \ref{fig:Plots-vs-q} but with a small twist; we shall re-plot
it against the variable $r$ in place of $q$. In Fig. \ref{fig:Plots-vs-r}
we plot $q$, $p$, $pq$ and $rq$ as functions of $r$; again, with
$k=r_{\text{s}}=1$, $q_{+}=1/2$, $q_{-}=-3/2$, and $r_{*}:=\left|q_{+}\right|^{\frac{q_{+}}{q_{+}-q_{-}}}\left|q_{-}\right|^{-\frac{q_{-}}{q_{+}-q_{-}}}=(27)^{1/4}/2\approx1.14$.
The quadrant labels (I), (II), (III) and (IV) defined from Fig. \ref{fig:Plots-vs-q}
are carried over to Fig. \ref{fig:Plots-vs-r}; see Remark \ref{rem:rem-plot-vs-q}.
In the leftmost panel of Fig. \ref{fig:Plots-vs-r}, the function
$q(r)$ is double-valued for $r>r_{*}$, and quadruple-valued for
$r<r_{*}$. This is the multi-valuedness problem which further handicaps
the inversion of $q$ in terms of $r$.\vskip4pt}

The multi-valuedness means that $r$, despite playing \emph{the} Schwarzschild
radial coordinate in metric \eqref{eq:B-metric-1}--\eqref{eq:B-metric-3},
is \emph{not} a suitable variable for metric \eqref{eq:a-4a}--\eqref{eq:a-4c}.
However, looking back at Fig. \ref{fig:Plots-vs-q}, we immediately
realize that the variable $q$ \emph{can} be a suitable coordinate
because all other variables -- viz. $r$, $p$ and others -- are
\emph{single-valued} functions of $q$. This observation guides us
to the first change of variable in the next section.

\subsection{\label{subsec:First-change}A first change of variable}
\begin{cor}
\noindent \label{cor:first-transf}The special Buchdahl-inspired metric
is fully analytic with respect to the variable $q$, per
\end{cor}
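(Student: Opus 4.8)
The plan is to change variable from the Schwarzschild coordinate $r$ to $q$ directly in metric \eqref{eq:a-4a}, exploiting the evolution rule \eqref{eq:a-1b}, which reads $dq/dr=p$ and hence $dr=dq/p$. This single differential relation converts every $r$-dependent piece of the line element into a function of $q$ alone; substituting the explicit $p(q)$ from \eqref{eq:a-4c} then collapses the coefficients into elementary closed forms, which is exactly the analyticity being claimed.

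First I would transform the radial part. From $dr^2=p^{-2}\,dq^2$ and $p=(q-q_{+})(q-q_{-})/(rq)$, the coefficient $pr/q$ of $dr^2$ reduces to $r^2/[(q-q_{+})(q-q_{-})]$, so that term becomes $\frac{r^2(q)}{(q-q_{+})(q-q_{-})}\,dq^2$. The same substitution turns the $dt^2$ coefficient $-pq/r$ into $-\frac{(q-q_{+})(q-q_{-})}{r^2(q)}$, while the angular part $r^2\,d\Omega^2$ is already a function of $q$ through \eqref{eq:a-4b}.

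Next I would evaluate the conformal prefactor. The exponent transforms as $\int\frac{dr}{rq}=\int\frac{dq}{rqp}=\int\frac{dq}{(q-q_{+})(q-q_{-})}$, the factor $rq$ cancelling against $p$. A partial-fraction split $\frac{1}{(q-q_{+})(q-q_{-})}=\frac{1}{q_{+}-q_{-}}\left(\frac{1}{q-q_{+}}-\frac{1}{q-q_{-}}\right)$ integrates to $\frac{1}{q_{+}-q_{-}}\ln\left|\frac{q-q_{+}}{q-q_{-}}\right|$, so the prefactor becomes $\left|\frac{q-q_{+}}{q-q_{-}}\right|^{k/(q_{+}-q_{-})}$. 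Assembling the three pieces delivers the metric wholly in $q$, with $r(q)$ furnished explicitly by \eqref{eq:a-4b}.

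The manipulations are routine once $dr=dq/p$ is installed; the only care needed is the piecewise bookkeeping of signs and absolute values across the zeros $q=q_{\pm}$ and the pole $q=0$ of $p$, where $p$, $r$, and the prefactor change sign or diverge. I expect this sign/interval tracking --- rather than any genuine analytic difficulty --- to be the sole mild obstacle, and it is resolved by restricting the partial-fraction integration to each of the $q$-intervals already isolated in the preceding Remarks.
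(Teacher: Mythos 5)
Your proposal is correct and follows essentially the same route as the paper: both use $dr=dq/p$ from \eqref{eq:a-1b} to convert $\frac{pr}{q}dr^{2}$ into $\frac{r}{pq}dq^{2}$, and both evaluate the conformal exponent via $\int\frac{dr}{rq}=\int\frac{dq}{(q-q_{+})(q-q_{-})}$ by partial fractions to obtain $\left|\frac{q-q_{+}}{q-q_{-}}\right|^{k/(q_{+}-q_{-})}$. The only step you leave implicit is the final reduction of $\frac{(q-q_{+})(q-q_{-})}{r^{2}(q)}$ to the power-law form \eqref{eq:a-5c} using \eqref{eq:a-4b} and $q_{+}+q_{-}=-r_{\text{s}}$, which is a routine exponent computation.
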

\begin{align}
ds^{2} & =e^{\omega(q)}\biggl\{-\frac{p(q)\,q}{r(q)}dt^{2}+\frac{r(q)}{p(q)\,q}dq^{2}+r^{2}(q)d\Omega^{2}\biggr\}\label{eq:a-5a}\\
r(q) & =\left|q-q_{+}\right|^{\frac{q_{+}}{q_{+}-q_{-}}}\left|q-q_{-}\right|^{\frac{q_{-}}{q_{-}-q_{+}}}\label{eq:a-5b}\\
\frac{q\,p(q)}{r(q)} & =\text{sgn}\left(\frac{q-q_{+}}{q-q_{-}}\right)\left|\frac{q-q_{+}}{q-q_{-}}\right|^{\frac{r_{\text{s}}}{q_{+}-q_{-}}}\label{eq:a-5c}\\
e^{\omega(q)} & =\left|\frac{q-q_{+}}{q-q_{-}}\right|^{\frac{k}{q_{+}-q_{-}}}\label{eq:a-5d}\\
k & =\left(-\frac{4}{3}q_{+}q_{-}\right)^{1/2}\label{eq:a-5e}
\end{align}

\begin{proof}
Eq. \eqref{eq:a-1b} gives
\begin{equation}
dr=\frac{dq}{p}\label{eq:a-6b}
\end{equation}
from which, we deduce that
\begin{equation}
\frac{pr}{q}\,dr^{2}=\frac{r}{p\,q}\,dq^{2}\label{eq:a-6c}
\end{equation}
Metric \eqref{eq:a-4a} thus can be brought into \eqref{eq:a-5a},
with the conformal factor
\begin{equation}
e^{\omega(q)}:=e^{k\int\frac{dr}{r\,q(r)}}=e^{k\int\frac{dq}{p(q)r(q)q}}\label{eq:a-6d}
\end{equation}
which by combining with Eqs. \eqref{eq:a-4c}, produces Eq. \eqref{eq:a-5d},
per
\begin{equation}
e^{\omega(q)}=e^{k\int\frac{dq}{(q-q_{+})(q-q_{-})}}=\left|\frac{q-q_{+}}{q-q_{-}}\right|^{\frac{k}{q_{+}-q_{-}}}\label{eq:a-6e}
\end{equation}
Also from Eq. \eqref{eq:a-4c}
\begin{equation}
\frac{q\,p}{r}=\frac{(q-q_{+})(q-q_{-})}{r^{2}}\label{eq:a-6f}
\end{equation}
and, by using Eq. \eqref{eq:a-4b} and noting that $q_{+}+q_{-}=r_{\text{s}}$
by virtue of \eqref{eq:a-0c}, we arrive at Eq. \eqref{eq:a-5c}.
\end{proof}
\vskip8pt
\begin{rem}
\label{rem:An-immediate-improvement}An immediate improvement of metric
\eqref{eq:a-5a} over metric \eqref{eq:a-4a} is that, apart from
the conformal factor, the two components $g_{00}$ and $g_{11}$ are
reciprocal of each other. This feature resembles that in the Schwarzschild
metric.
\end{rem}
\vskip6pt

\subsection{\label{subsec:Second-change}A second change of variable}

Despite getting a step closer to the form of a Schwarzschild metric
(see Remark \ref{rem:An-immediate-improvement} above), the term $pq/r$
in metric \eqref{eq:a-5a} is still rather cumbersome; see Eq. \eqref{eq:a-5c}.
It is thus desirable to find a more transparent alternative to the
coordinate $q$. The lower right panel in Fig. \vref{fig:Plots-vs-q}
suggests a further improvement. Not only is the combination $pq/r$
a single-valued function of $q$, the reverse is also true: $q$ is
a single-valued function of $pq/r$. We shall thus choose $pq/r$
as \emph{the} radial coordinate in replacement of $q$.\vskip4pt

That is to say, let us define a \emph{new} radial coordinate $\rho\in\mathbb{R}$
such that
\begin{equation}
1-\frac{r_{\text{s}}}{\rho}:=\frac{p(q)\,q}{r(q)}\label{eq:a-7a}
\end{equation}
which, by way of \eqref{eq:a-5c}, becomes
\begin{equation}
1-\frac{r_{\text{s}}}{\rho}=\text{sgn}\left(\frac{q-q_{+}}{q-q_{-}}\right)\left|\frac{q-q_{+}}{q-q_{-}}\right|^{\frac{r_{\text{s}}}{q_{+}-q_{-}}}\label{eq:a-7b}
\end{equation}
Remarkably, despite that $q$ is \emph{not} analytically expressible
in terms of $r$ -- a serious hindrance that we alluded to at the
beginning of Sec. \ref{subsec:Problems} -- the relation \eqref{eq:a-7b}
\emph{can} be inverted to express $q$ as a analytical function of
$\rho$. Furthermore, since $r$ is an analytical function of $\rho$
per \eqref{eq:a-5b}, $r$ in turn can be made an analytical function
of $\rho$. The inversion of Eq. \eqref{eq:a-7b} shall be carried
out in the following Lemma.\vskip8pt
\begin{figure*}[!t]
\noindent \begin{centering}
\includegraphics[scale=0.86]{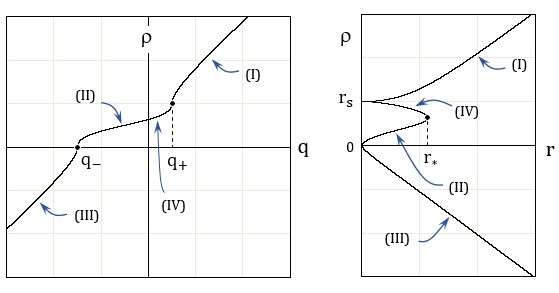}
\par\end{centering}
\caption{\label{fig:rho-vs-q}$\rho$ as functions of $q$ and $r$. Plots
are for $r_{\text{s}}=1,\ k=r_{\text{s}}$.}
\end{figure*}

\noindent \begin{center}
\begin{figure*}[!t]
\noindent \begin{centering}
\includegraphics[scale=0.85]{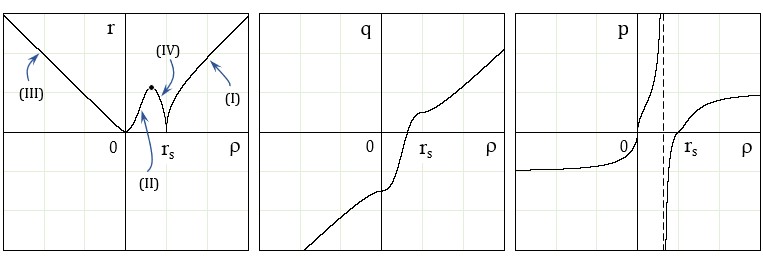}
\par\end{centering}
\caption{\label{fig:r-q-p-vs-rho}Various variables as functions of $\rho$.
Plots are for $r_{\text{s}}=1,\ k=r_{\text{s}}$.}
\end{figure*}
\par\end{center}
\begin{lem}
The Schwarzschild coordinate $r$ is expressible in terms of the variable
$\rho$, per 
\begin{align}
r(\rho) & =\frac{\zeta\,r_{\text{s}}\left|1-\frac{r_{\text{s}}}{\rho}\right|^{\frac{1}{2}\left(\zeta-1\right)}}{\left|1-\text{\emph{sgn}}\left(1-\frac{r_{\text{s}}}{\rho}\right)\left|1-\frac{r_{\text{s}}}{\rho}\right|^{\zeta}\right|}\label{eq:a-8a}\\
\zeta & :=\sqrt{1+3k^{2}/r_{\text{s}}^{2}}\label{eq:a-8b}
\end{align}
\vskip8pt
\end{lem}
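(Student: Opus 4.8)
The plan is to regard \eqref{eq:a-7b} as an implicit relation for a single auxiliary ratio, invert it in closed form, and then feed the result into the already-analytic expression \eqref{eq:a-5b} for $r(q)$. To this end I would introduce the shorthands
\[
u:=\frac{q-q_{+}}{q-q_{-}},\qquad w:=1-\frac{r_{\text{s}}}{\rho},
\]
and record three facts that follow from \eqref{eq:a-0c} and \eqref{eq:a-8b} under the convention $r_{\text{s}}>0$: namely $q_{+}-q_{-}=\sqrt{r_{\text{s}}^{2}+3k^{2}}=r_{\text{s}}\zeta$, $q_{+}=\tfrac{r_{\text{s}}}{2}(\zeta-1)$, and hence $r_{\text{s}}/(q_{+}-q_{-})=1/\zeta$. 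With these, \eqref{eq:a-7b} reads simply $w=\text{sgn}(u)\,|u|^{1/\zeta}$.

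The crux is to invert this scalar relation. Because $|u|^{1/\zeta}\ge0$, the sign of $w$ must coincide with the sign of $u$, so $\text{sgn}(u)=\text{sgn}(w)$ and $|w|=|u|^{1/\zeta}$; raising to the power $\zeta>0$ and restoring the sign then yields the closed-form inversion $u=\text{sgn}(w)\,|w|^{\zeta}$. This is precisely what makes the Lemma possible: even though $q$ is not an elementary function of $r$, the ratio $u$ is an elementary function of $\rho$.

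It then remains to re-express $r$ through $u$. Writing $|q-q_{+}|=|u|\,|q-q_{-}|$ and using $\tfrac{q_{+}}{q_{+}-q_{-}}-\tfrac{q_{-}}{q_{+}-q_{-}}=1$, the product in \eqref{eq:a-5b} collapses to $r=|u|^{q_{+}/(q_{+}-q_{-})}\,|q-q_{-}|$. Solving the definition of $u$ for the combination $q-q_{-}$ gives $q-q_{-}=(q_{+}-q_{-})/(1-u)$, whence $|q-q_{-}|=(q_{+}-q_{-})/|1-u|$ and therefore
\[
r=\frac{(q_{+}-q_{-})\,|u|^{q_{+}/(q_{+}-q_{-})}}{|1-u|}.
\]
Substituting the bookkeeping values $q_{+}/(q_{+}-q_{-})=(\zeta-1)/(2\zeta)$ and $q_{+}-q_{-}=r_{\text{s}}\zeta$, together with the identity $|u|^{(\zeta-1)/(2\zeta)}=|w|^{(\zeta-1)/2}$ (from $|u|=|w|^{\zeta}$) and $1-u=1-\text{sgn}(w)\,|w|^{\zeta}$, reproduces \eqref{eq:a-8a} verbatim.

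I expect the only delicate point to be the careful treatment of the sign function as it passes through the fractional power: justifying $\text{sgn}(u)=\text{sgn}(w)$ and checking that the step $|w|=|u|^{1/\zeta}\mapsto|u|=|w|^{\zeta}$ is valid on both sides of the interior--exterior boundary $\rho=r_{\text{s}}$, where $w$ changes sign. Everything else is straightforward algebraic bookkeeping governed by Vieta's relations for \eqref{eq:a-0d}.
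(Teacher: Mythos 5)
Your proposal is correct and follows essentially the same route as the paper: both invert the relation \eqref{eq:a-7b} for the ratio $(q-q_{+})/(q-q_{-})$ using $r_{\text{s}}/(q_{+}-q_{-})=1/\zeta$ and then substitute back into \eqref{eq:a-5b} via Vieta's relations. The only difference is organizational --- the paper carries out the inversion separately in three cases ($q>q_{+}$, $q<q_{-}$, $q_{-}<q<q_{+}$), solving for $q$ itself and then for $q-q_{\pm}$, whereas you handle all signs at once through the identity $u=\text{sgn}(w)\,\lvert w\rvert^{\zeta}$ and the single combination $q-q_{-}=(q_{+}-q_{-})/(1-u)$, which is a clean and valid shortcut.
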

\begin{proof}
Denote:
\begin{equation}
x:=1-\frac{r_{\text{s}}}{\rho}\label{eq:a-10c}
\end{equation}
then, from \eqref{eq:a-7b}
\begin{equation}
x=\text{sgn}\frac{q-q_{+}}{q-q_{-}}\left|\frac{q-q_{+}}{q-q_{-}}\right|^{\frac{r_{\text{s}}}{q_{+}-q_{-}}}\label{eq:a-10d}
\end{equation}
Further define $\zeta:=\sqrt{1+3k^{2}/r_{\text{s}}^{2}}\geqslant1\ \ \forall k\in\mathbb{R}$,
then from the definition of $q_{\pm}$ in \eqref{eq:a-0c} we get
\begin{equation}
\frac{r_{\text{s}}}{q_{+}-q_{-}}=\frac{r_{\text{s}}}{\sqrt{r_{\text{s}}^{2}+3k^{2}}}=\frac{1}{\zeta}\label{eq:a-10e}
\end{equation}

\noindent \emph{\vskip12pt Case 1:} $\ $For $q>q_{+}$ then $0<x<1$.\vskip12pt

\noindent Inverting Eq. \eqref{eq:a-10d}:
\begin{align}
\left(\frac{q-q_{+}}{q-q_{-}}\right)^{1/\zeta} & =x=\left|x\right|\label{eq:a-11a}\\
q & =\frac{1}{1-\left|x\right|^{\zeta}}q_{+}-\frac{\left|x\right|^{\zeta}}{1-\left|x\right|^{\zeta}}q_{-}\label{eq:a-11b}
\end{align}
then
\begin{align}
q-q_{+} & =\frac{\left|x\right|^{\zeta}}{1-\left|x\right|^{\zeta}}(q_{+}-q_{-})\label{eq:a-11c}\\
q-q_{-} & =\frac{1}{1-\left|x\right|^{\zeta}}(q_{+}-q_{-})\label{eq:a-11d}
\end{align}
and\small
\begin{align}
r & =(q-q_{+})^{\frac{q_{+}}{q_{+}-q_{-}}}(q-q_{-})^{-\frac{q_{-}}{q_{+}-q_{-}}}\label{eq:a-11e}\\
 & =\left(\frac{\left|x\right|^{\zeta}}{1-\left|x\right|^{\zeta}}\right)^{\frac{q_{+}}{q_{+}-q_{-}}}\left(\frac{1}{1-\left|x\right|^{\zeta}}\right)^{-\frac{q_{-}}{q_{+}-q_{-}}}(q_{+}-q_{-})\label{eq:a-11f}\\
 & =\frac{\left|x\right|^{\zeta\frac{1}{\sqrt{r_{\text{s}}^{2}+3k^{2}}}\frac{1}{2}\left(-r_{\text{s}}+\sqrt{r_{\text{s}}^{2}+3k^{2}}\right)}}{1-\left|x\right|^{\zeta}}\,\zeta r_{\text{s}}\label{eq:a-11g}\\
 & =\zeta r_{\text{s}}\frac{\left|x\right|^{\frac{1}{2}(\zeta-1)}}{1-\left|x\right|^{\zeta}}\label{eq:a-11h}
\end{align}
\normalsize\emph{\vskip12pt Case 2:} $\ $For $q<q_{-}$ then $x>1$.\vskip2pt

\noindent Inverting Eq. \eqref{eq:a-10d}:
\begin{align}
\left(\frac{q-q_{+}}{q-q_{-}}\right)^{1/\zeta} & =x=\left|x\right|\label{eq:a-12a}\\
q & =\frac{1}{1-\left|x\right|^{\zeta}}q_{+}-\frac{\left|x\right|^{\zeta}}{1-\left|x\right|^{\zeta}}q_{-}\label{eq:a-12b}
\end{align}
then
\begin{align}
q-q_{+} & =\frac{\left|x\right|^{\zeta}}{1-\left|x\right|^{\zeta}}(q_{+}-q_{-})\label{eq:a-12c}\\
q-q_{-} & =\frac{1}{1-\left|x\right|^{\zeta}}(q_{+}-q_{-})\label{eq:a-12d}
\end{align}
and\small
\begin{align}
r & =(q_{+}-q)^{\frac{q_{+}}{q_{+}-q_{-}}}(q_{-}-q)^{-\frac{q_{-}}{q_{+}-q_{-}}}\label{eq:a-12e}\\
 & =\left(\frac{\left|x\right|^{\zeta}}{\left|1-\left|x\right|^{\zeta}\right|}\right)^{\frac{q_{+}}{q_{+}-q_{-}}}\left(\frac{1}{\left|1-\left|x\right|^{\zeta}\right|}\right)^{-\frac{q_{-}}{q_{+}-q_{-}}}(q_{+}-q_{-})\label{eq:a-12f}\\
 & =\frac{\left|x\right|^{\zeta\frac{1}{\sqrt{r_{\text{s}}^{2}+3k^{2}}}\frac{1}{2}\left(-r_{\text{s}}+\sqrt{r_{\text{s}}^{2}+3k^{2}}\right)}}{\left|1-\left|x\right|^{\zeta}\right|}\,\zeta r_{\text{s}}\label{eq:a-12g}\\
 & =\zeta r_{\text{s}}\frac{\left|x\right|^{\frac{1}{2}(\zeta-1)}}{\left|1-\left|x\right|^{\zeta}\right|}\label{eq:a-12h}
\end{align}
\normalsize\emph{\vskip12pt Case 3:} $\ $For $q_{-}<q<q_{+}$ then
$x<0$. \vskip12pt

\noindent Inverting Eq. \eqref{eq:a-10d}:
\begin{align}
\left(\frac{q-q_{+}}{q_{-}-q}\right)^{1/\zeta} & =-x=\left|x\right|\label{eq:a-13a}\\
q & =\frac{1}{1+\left|x\right|^{\zeta}}q_{+}+\frac{\left|x\right|^{\zeta}}{1+\left|x\right|^{\zeta}}q_{-}\label{eq:a-13b}
\end{align}
then
\begin{align}
q-q_{+} & =-\frac{\left|x\right|^{\zeta}}{1+\left|x\right|^{\zeta}}(q_{+}-q_{-})\label{eq:a-13c}\\
q-q_{-} & =\frac{1}{1+\left|x\right|^{\zeta}}(q_{+}-q_{-})\label{eq:a-13d}
\end{align}
and\small
\begin{align}
r & =(q_{+}-q)^{\frac{q_{+}}{q_{+}-q_{-}}}(q-q_{-})^{-\frac{q_{-}}{q_{+}-q_{-}}}\label{eq:a-13e}\\
 & =\left(\frac{\left|x\right|^{\zeta}}{1+\left|x\right|^{\zeta}}\right)^{\frac{q_{+}}{q_{+}-q_{-}}}\left(\frac{1}{1+\left|x\right|^{\zeta}}\right)^{-\frac{q_{-}}{q_{+}-q_{-}}}(q_{+}-q_{-})\label{eq:a-13f}\\
 & =\frac{\left|x\right|^{\zeta\frac{1}{\sqrt{r_{\text{s}}^{2}+3k^{2}}}\frac{1}{2}\left(-r_{\text{s}}+\sqrt{r_{\text{s}}^{2}+3k^{2}}\right)}}{1+\left|x\right|^{\zeta}}\,\zeta r_{\text{s}}\label{eq:a-13g}\\
 & =\zeta r_{\text{s}}\frac{\left|x\right|^{\frac{1}{2}(\zeta-1)}}{1+\left|x\right|^{\zeta}}\label{eq:a-13h}
\end{align}
\normalsize

\noindent In all cases, we have
\begin{equation}
r=\zeta\,r_{\text{s}}\frac{\left|x\right|^{\frac{1}{2}(\zeta-1)}}{\left|1-\text{sgn}(x)\left|x\right|^{\zeta}\right|}\label{eq:a-14a}
\end{equation}
which is the desired result, Eq. \eqref{eq:a-8a}.
\end{proof}
\noindent \vskip8pt
\begin{rem}
For illustration, in Fig. \ref{fig:rho-vs-q}, we plot the new variable
$\rho$ against $q$ and $r$. In Fig. \ref{fig:r-q-p-vs-rho}, $r$,
$q$ and $p$ are plotted against $\rho$. In these figures, $k=r_{\text{s}}=1$.
The quadrant labels are attached accordingly.
\end{rem}

\subsection{\label{subsec:The-special-Buchdahl-inspired}The \emph{special} Buchdahl-inspired
metric}

We are now ready for the final step of our derivation. The Buchdahl-inspired
metric with $\Lambda=0$ is provided in Lemma \ref{lem:lem-final}
below.\vskip12pt
\begin{lem}
\noindent \label{lem:lem-final}The \emph{special} Buchdahl-inspired
metric is characterized by 2 parameters, $r_{\text{s}}$ and $\tilde{k}$:\small
\begin{align}
ds^{2} & =\left|1-\frac{r_{\text{s}}}{\rho}\right|^{\tilde{k}}\Biggl\{-\left(1-\frac{r_{\text{s}}}{\rho}\right)dt^{2}+\left(1-\frac{r_{\text{s}}}{\rho}\right)^{-1}\frac{r^{4}(\rho)}{\rho^{4}}\,d\rho^{2}\nonumber \\
 & \ \ \ \ \ \ \ \ \ \ \ \ \ \ \ +r^{2}(\rho)\left(d\theta^{2}+\sin^{2}\theta d\phi^{2}\right)\Biggr\}\label{eq:special-B-1}
\end{align}
\normalsize in which the Schwarzschild radial coordinate $r$ is
related to the new radial coordinate $\rho$ per
\begin{align}
r(\rho) & :=\frac{\zeta\,r_{\text{s}}\left|1-\frac{r_{\text{s}}}{\rho}\right|^{\frac{1}{2}(\zeta-1)}}{\left|1-\text{\emph{sgn}}\left(1-\frac{r_{\text{s}}}{\rho}\right)\left|1-\frac{r_{\text{s}}}{\rho}\right|^{\zeta}\right|}\label{eq:special-B-2}\\
\zeta & :=\sqrt{1+3\tilde{k}^{2}}\label{eq:special-B-3}
\end{align}
\end{lem}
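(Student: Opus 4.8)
The plan is to take the metric already cast in $q$-form in Corollary \ref{cor:first-transf}, namely Eq. \eqref{eq:a-5a}, and push it through the change of variable $\rho$ defined implicitly by Eq. \eqref{eq:a-7a}, handling the four pieces of the line element ($dt^{2}$, $d\rho^{2}$, angular, and conformal prefactor) one at a time. Three of the four are pure bookkeeping. The $dt^{2}$ coefficient is immediate: by the very definition \eqref{eq:a-7a} one has $p(q)q/r(q)=1-r_{\text{s}}/\rho$, so $-\frac{pq}{r}dt^{2}=-(1-r_{\text{s}}/\rho)\,dt^{2}$. The angular block $r^{2}(q)\,d\Omega^{2}$ becomes $r^{2}(\rho)\,d\Omega^{2}$ by inserting the inversion $r(\rho)$ already supplied by the preceding Lemma, Eq. \eqref{eq:a-8a}, which is reproduced as \eqref{eq:special-B-2}. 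For the conformal prefactor I would combine \eqref{eq:a-5c} and \eqref{eq:a-5d}: since $\bigl|\frac{q-q_{+}}{q-q_{-}}\bigr|^{r_{\text{s}}/(q_{+}-q_{-})}=|pq/r|=|1-r_{\text{s}}/\rho|$, raising both sides to the power $k/r_{\text{s}}$ gives $e^{\omega(q)}=\bigl|1-r_{\text{s}}/\rho\bigr|^{k/r_{\text{s}}}$. This identifies the new parameter as $\tilde{k}:=k/r_{\text{s}}$, whereupon $\zeta=\sqrt{1+3k^{2}/r_{\text{s}}^{2}}=\sqrt{1+3\tilde{k}^{2}}$, matching \eqref{eq:special-B-3}.

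The one substantive step is the radial block: I must convert $\frac{r(q)}{p(q)q}\,dq^{2}$ from \eqref{eq:a-5a} into $\bigl(1-r_{\text{s}}/\rho\bigr)^{-1}\frac{r^{4}(\rho)}{\rho^{4}}\,d\rho^{2}$. Writing $dq^{2}=(dq/d\rho)^{2}\,d\rho^{2}$ and using $r/(pq)=(1-r_{\text{s}}/\rho)^{-1}$ from \eqref{eq:a-7a}, the entire claim collapses to the single Jacobian identity
\[
\frac{dq}{d\rho}=\frac{r^{2}}{\rho^{2}},
\]
which I expect to be the heart of the proof. To establish it, set $f:=pq/r=1-r_{\text{s}}/\rho$. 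Differentiating $f$ one way gives $df/d\rho=r_{\text{s}}/\rho^{2}$. Differentiating the other way, I take the logarithmic derivative of the closed form \eqref{eq:a-5c}, namely $\ln|f|=\frac{r_{\text{s}}}{q_{+}-q_{-}}\ln\bigl|\frac{q-q_{+}}{q-q_{-}}\bigr|$, which yields $d\ln|f|/dq=r_{\text{s}}/[(q-q_{+})(q-q_{-})]$; multiplying by $f$ and substituting $f=(q-q_{+})(q-q_{-})/r^{2}$ from \eqref{eq:a-6f} makes the factor $(q-q_{+})(q-q_{-})$ cancel, leaving $df/dq=r_{\text{s}}/r^{2}$. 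Dividing the two derivatives gives $dq/d\rho=(df/d\rho)/(df/dq)=r^{2}/\rho^{2}$, as required.

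Assembling the four pieces then reproduces Eq. \eqref{eq:special-B-1} exactly. The main obstacle, and the only place where genuine computation enters, is precisely the Jacobian identity $dq/d\rho=r^{2}/\rho^{2}$; the unexpectedly clean form of the final metric rests entirely on the cancellation, via \eqref{eq:a-6f}, of the awkward factor $(q-q_{+})(q-q_{-})$ that otherwise clutters every intermediate expression. I would also note that the sign of $dq/d\rho$ is irrelevant, since only $dq^{2}$ enters the line element, so a single calculation covers all three coordinate regimes ($q>q_{+}$, $q<q_{-}$, and $q_{-}<q<q_{+}$) that had to be treated separately when inverting for $r(\rho)$ in the preceding Lemma.
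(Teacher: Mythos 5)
Your proposal is correct and follows essentially the same route as the paper: starting from the $q$-form metric of Corollary \ref{cor:first-transf}, reading off $g_{tt}$ and the angular block directly from the definition \eqref{eq:a-7a}, obtaining the conformal factor from \eqref{eq:a-9a}, and establishing the radial block by logarithmic differentiation of $\ln\left|1-\frac{r_{\text{s}}}{\rho}\right|=\frac{r_{\text{s}}}{q_{+}-q_{-}}\ln\left|\frac{q-q_{+}}{q-q_{-}}\right|$, which is exactly the paper's Eqs. \eqref{eq:a-9d}--\eqref{eq:a-9h}. The only (cosmetic) difference is that you verify the key identity $dq^{2}=r^{4}\,d\rho^{2}/\rho^{4}$ by cancelling $(q-q_{+})(q-q_{-})$ via \eqref{eq:a-6f}, whereas the paper substitutes the explicit power-law forms \eqref{eq:a-5b} and \eqref{eq:a-9a} and matches exponents using $q_{+}+q_{-}=-r_{\text{s}}$; both are the same computation.
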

\vskip4pt
\begin{proof}
Firstly, Eq. \eqref{eq:a-7b} leads to

\begin{equation}
\left|1-\frac{r_{\text{s}}}{\rho}\right|=\left|\frac{q-q_{+}}{q-q_{-}}\right|^{\frac{r_{\text{s}}}{q_{+}-q_{-}}}\label{eq:a-9a}
\end{equation}
which neatly brings the conformal factor \eqref{eq:a-5d} to
\begin{equation}
e^{\omega(\rho)}=\left|1-\frac{r_{\text{s}}}{\rho}\right|^{\frac{k}{r_{\text{s}}}}\label{eq:a-9b}
\end{equation}
Secondly, Eq. \eqref{eq:a-9a} is equivalent to
\begin{equation}
\ln\left|1-\frac{r_{\text{s}}}{\rho}\right|=\frac{r_{\text{s}}}{q_{+}-q_{-}}\ln\left|\frac{q-q_{+}}{q-q_{-}}\right|\label{eq:a-9d}
\end{equation}
Taking derivative on both sides of this equation:
\begin{equation}
\left(1-\frac{r_{\text{s}}}{\rho}\right)^{-1}\frac{r_{\text{s}}}{\rho^{2}}\,d\rho=\frac{r_{\text{s}}}{(q-q_{+})(q-q_{-})}\,dq\label{eq:a-9e}
\end{equation}
which, with the aid of Eqs. \eqref{eq:a-5b}, \eqref{eq:a-9a} and
$r_{\text{s}}=-\left(q_{+}+q_{-}\right)$ per \eqref{eq:a-0c}, yields\small
\begin{align}
dq^{2} & =\left|\frac{q-q_{+}}{q-q_{-}}\right|^{-\frac{2r_{\text{s}}}{q_{+}-q_{-}}}(q-q_{+})^{2}(q-q_{-})^{2}\frac{d\rho^{2}}{\rho^{4}}\label{eq:a-9f}\\
 & =\left|q-q_{+}\right|^{\frac{4q_{+}}{q_{+}-q_{-}}}\left|q-q_{-}\right|^{-\frac{4q_{-}}{q_{+}-q_{-}}}\frac{d\rho^{2}}{\rho^{4}}\label{eq:a-9g}\\
 & =r^{4}(q)\frac{d\rho^{2}}{\rho^{4}}\label{eq:a-9h}
\end{align}
\normalsize Finally, by defining $\tilde{k}:=k/r_{\text{s}}$ and
using \eqref{eq:a-7a} and \eqref{eq:a-9h}, the components in metric
\eqref{eq:a-5a} become\small
\begin{align}
g_{tt} & =-e^{\omega}\frac{p\,q}{r}=-\left|1-\frac{r_{\text{s}}}{\rho}\right|^{\tilde{k}}\left(1-\frac{r_{\text{s}}}{\rho}\right)\label{eq:a-9-i}\\
g_{\rho\rho} & =g_{qq}\frac{dq^{2}}{d\rho^{2}}=e^{\omega}\frac{r}{p\,q}\frac{r^{4}}{\rho^{4}}=\left|1-\frac{r_{\text{s}}}{\rho}\right|^{\tilde{k}}\frac{1}{1-\frac{r_{\text{s}}}{\rho}}\frac{r^{4}}{\rho^{4}}\\
g_{\theta\theta} & =e^{\omega}\,r^{2}=\left|1-\frac{r_{\text{s}}}{\rho}\right|^{\tilde{k}}r^{2}\\
g_{\phi\phi} & =g_{\theta\theta}\,\sin^{2}\theta
\end{align}
\normalsize
\end{proof}
\vskip6pt
\begin{rem}
The \emph{rescaled} Buchdahl parameter
\begin{equation}
\tilde{k}:=\frac{k}{r_{\text{s}}}\label{eq:def-tilde-k}
\end{equation}
is a \emph{dimensionless} ratio.
\end{rem}
\vskip6pt
\begin{rem}
At $\tilde{k}=0$, Eqs. \eqref{eq:special-B-2} and \eqref{eq:special-B-3}
yield $\zeta=1$ and $r(\rho)\equiv\rho$. The recovery of the Schwarzschild
metric from metric \eqref{eq:special-B-1} is obvious.
\end{rem}
\vskip6pt
\begin{rem}
The combination $1-\frac{r_{\text{s}}}{\rho}$ is universal in the
\emph{special} Buchdahl-inspired metric, \eqref{eq:special-B-1}--\eqref{eq:special-B-3},
as it is in the classic Schwarzschild metric. The $g_{tt}$ component
flips sign when $\rho$ varies across $r_{\text{s}}$. The radius
$r_{\text{s}}$ plays the role of the ``Schwarzschild'' radius for
pure $\mathcal{R}^{2}$ spacetime structures. 
\end{rem}
\vskip6pt
\begin{rem}
\textcolor{black}{In metric }\eqref{eq:special-B-1}--\eqref{eq:special-B-3}\textcolor{black}{,
the radial coordinate is $\rho$ and the physical ``origin'' is
located at $\rho=0$. The usual Schwarzschild coordinate $r(\rho)$
is not the radial coordinate for this metric. Rather, apart from the
conformal factor $\left|1-\frac{r_{\text{s}}}{\rho}\right|^{\tilde{k}}$,
it acts as an }\textcolor{black}{\emph{areal}}\textcolor{black}{{} coordinate
via the term $r^{2}(\rho)\left[d\theta^{2}+\sin^{2}\theta d\phi^{2}\right]$
in \eqref{eq:special-B-1}.}
\end{rem}
\vskip6pt
\begin{rem}
As $\rho\rightarrow\infty$, per \eqref{eq:special-B-2} we have $r(\rho)\simeq\rho$.
Metric \eqref{eq:special-B-1} asymptotically is \small
\begin{equation}
\left|1-\frac{r_{\text{s}}}{\rho}\right|^{\tilde{k}}\biggl\{-\Bigl(1-\frac{r_{\text{s}}}{\rho}\Bigr)dt^{2}+\frac{r^{4}(\rho)\,d\rho^{2}}{\rho^{4}\Bigl(1-\frac{r_{\text{s}}}{\rho}\Bigr)}+r^{2}(\rho)d\Omega^{2}\biggr\}\label{eq:asymp-Schwa}
\end{equation}
\normalsize We thus do \emph{not} obtain a Schwarzschild spacetime
but a \emph{conformally} Schwarzschild spacetime, with the conformal
factor being $\left|1-\frac{r_{\text{s}}}{\rho}\right|^{\tilde{k}}$.
In principle, the effects of $\tilde{k}$ should manifest via its
influence on the orbital motion of the massive objects, though not
that of light.
\end{rem}
\vskip6pt
\begin{rem}
\textcolor{black}{In \eqref{eq:asymp-Schwa}, since $\left|1-\frac{r_{\text{s}}}{\rho}\right|^{\tilde{k}}\rightarrow1$
when $r\rightarrow\infty$, the }\textcolor{black}{\emph{special}}\textcolor{black}{{}
Buchdahl-inspired metric is asymptotically flat. It can also be verified
to be Ricci-scalar-flat but not Ricci flat. Its 4 non-vanishing Ricci
tensor components are: \small
\begin{align}
\mathcal{R}_{tt} & =\frac{\tilde{k}(\tilde{k}+1)}{2\zeta^{4}}\left|x\right|^{2-2\zeta}\left(1-\text{sgn}(x)\left|x\right|^{\zeta}\right)^{4}\label{eq:a-18a}\\
\mathcal{R}_{\rho\rho} & =\frac{\tilde{k}}{2\rho^{4}\left|x\right|^{2}}\left[3\tilde{k}-1+2\zeta\frac{1+\text{sgn}(x)\left|x\right|^{\zeta}}{1-\text{sgn}(x)\left|x\right|^{\zeta}}\right]\label{eq:a-18b}\\
\mathcal{R}_{\theta\theta} & =\frac{\tilde{k}}{2\zeta^{2}}\left(1-\text{sgn}(x)\left|x\right|^{-\zeta}\right)\times\nonumber \\
 & \left[(k-1)\left(1-\text{sgn}(x)\left|x\right|^{\zeta}\right)+\zeta\left(1+\text{sgn}(x)\left|x\right|^{\zeta}\right)\right]\label{eq:a-18c}\\
\mathcal{R}_{\phi\phi} & =\mathcal{R}_{\theta\theta}\sin^{2}\theta\label{eq:a-18d}
\end{align}
\normalsize in which $x:=1-\frac{r_{\text{s}}}{\rho}$, $\zeta:=\sqrt{1+3\tilde{k}^{2}}$.}
\end{rem}
\vskip12pt

In closing of this main section, Lemma \ref{lem:lem-final} is the
central result of our paper. With this exact analytical result, we
are now well equipped to study the interior-exterior boundary and
the causal structure of $\mathcal{R}^{2}$ spacetime structures in
the rest of our paper.

\section{\label{sec:Application-I}Application I: $\ $Anomalous behavior
of interior/exterior boundary in $\mathcal{R}^{2}$ spacetime}

This section explores and reports a number of novel surprising properties
of the interior-exterior boundary of $\mathcal{R}^{2}$ spacetime,
described by the \emph{special} Buchdahl-inspired metric attained
in Lemma \ref{lem:lem-final}.\vskip4pt

Metric \eqref{eq:special-B-1}--\eqref{eq:special-B-3} bears an
interesting resemblance to the Schwarzschild metric, with four departures:
\begin{itemize}
\item The conformal factor, $\left|1-\frac{r_{\text{s}}}{\rho}\right|^{\tilde{k}}$;
\item The $g_{\rho\rho}$ component contains the ratio $\frac{r^{4}(\rho)}{\rho^{4}}$;
\item The angular part involves $r^{2}(\rho)$ instead of $\rho^{2}$.
\item The function $r(\rho)$ involves the signum function $\text{sgn}\left(1-\frac{r_{\text{s}}}{\rho}\right)$,
thus comprising two distinct expressions, one for $\rho<r_{\text{s}}$
and other for $\rho>r_{\text{s}}$.
\end{itemize}
\vskip1pt

Across $\rho=r_{\text{s}}$, the components $g_{00}$ and $g_{11}$
flip their signs, hence indicating an ``exterior'' region for $\rho\in(r_{\text{s}},+\infty)$
and an ``interior'' region for $\rho\in(0,r_{\text{s}})$. The nature
of the interior-exterior boundary can be deduced from the $\zeta-$Kruskal-Szekeres
diagram constructed in Section \ref{subsec:KS-diagram}. In Fig. \vref{fig:KS-diagram},
the boundary for $\tilde{k}\neq0$ (viz., $\zeta>1$) is the four
hyperbolic branches surrounding Region (VI); the $\rho=r_{\text{s}}$
boundary is \emph{not} a null surface in this situation. For $\tilde{k}=0$,
the four hyperbolic branches degenerate into two straight lines $T=\pm X$
that are null surfaces, making the $\rho=r_{\text{s}}$ boundary the
usual Schwarzschild horizon. For all values of $\tilde{k}\in\mathbb{R}$,
Regions (II) and (IV) in Fig. \ref{fig:KS-diagram} represent ``interior''
sections of an $\mathcal{R}^{2}$ spacetime. \vskip4pt

\subsection{\label{subsec:Areal-coordinate}Behavior of the areal radial coordinate
in pure $\mathcal{R}^{2}$ gravity}

We first start with the \emph{areal} coordinate $r(\rho)$ as a function
of the new radial coordinate $\rho$. The relation is given in Eqs.
\eqref{eq:special-B-2}--\eqref{eq:special-B-3}. The plot of $r(\rho)$
is shown in Fig. \ref{fig:r-vs-rho} for various values of $\tilde{k}$.
In each panel, the curve is juxtaposed against the benchmark $r(\rho)=\rho$
diagonal (dotted) line which corresponds to the case $\tilde{k}=0$
(viz. $\zeta=\sqrt{1+3\tilde{k}^{2}}=1$).\vskip12pt

\noindent \emph{The asymptotics:\vskip12pt}
\begin{itemize}
\item As $\rho\rightarrow0$,
\begin{equation}
r(\rho)\simeq\zeta r_{\text{s}}^{\frac{1}{2}(1-\zeta)}\rho^{\frac{1}{2}(\zeta+1)}\rightarrow0\ \ \ \ \ \forall\tilde{k}
\end{equation}
\item As $\rho\rightarrow\infty$, the areal coordinate is asymptotically
\begin{equation}
r(\rho)\simeq\rho-\frac{k^{2}r_{\text{s}}^{2}}{8\,\rho}
\end{equation}
\item As $\rho\rightarrow r_{\text{s}}$, for $\tilde{k}\neq0$, $\zeta$
is strictly greater than 1 and $r(\rho)\simeq\zeta r_{\text{s}}\left|1-\frac{r_{\text{s}}}{\rho}\right|^{\frac{1}{2}\left(\zeta-1\right)}\rightarrow0$.
All curves with $\tilde{k}\neq0$ have a zero at $\rho=r_{\text{s}}$
that separates the interior region, $\rho<r_{\text{s}}$, from the
exterior region, $\rho>r_{\text{s}}$.
\end{itemize}
The fact that the areal coordinate $r(\rho)$ shrinks to zero on the
 interior-exterior boundary if $\tilde{k}\neq0$ is a \emph{novel}
feature of pure $\mathcal{R}^{2}$ spacetime structures.

\noindent 
\begin{figure*}[!t]
\noindent \begin{centering}
\includegraphics{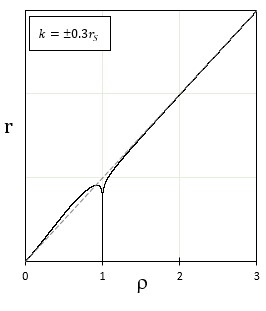}\includegraphics{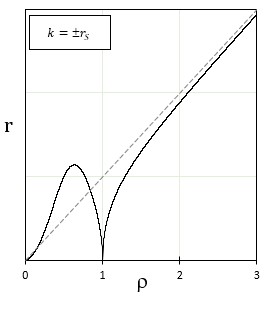}\includegraphics{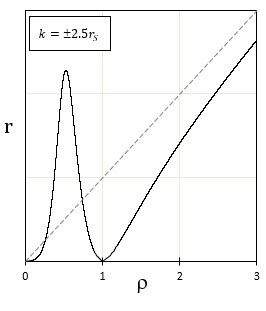}
\par\end{centering}
\caption{\label{fig:r-vs-rho}Areal coordinate $r$ as function of the new
coordinate $\rho$. For $\tilde{k}=\pm0.3,\pm1,\pm2.5$. In all plots,
$r_{\text{s}}=1$.}
\end{figure*}

\subsection{\label{subsec:Determinant}Determinant of the metric}

We next look into the determinant of metric \eqref{eq:special-B-1}--\eqref{eq:special-B-3},
\begin{align}
-\det g & =\left|1-\frac{r_{\text{s}}}{\rho}\right|^{4\tilde{k}}\frac{r^{8}(\rho)}{\rho^{4}}\,\sin^{2}\theta\label{eq:det-1}\\
 & =\frac{\zeta^{8}r_{\text{s}}^{8}}{\rho^{4}}\frac{\left|1-\frac{r_{\text{s}}}{\rho}\right|^{4(\zeta+\tilde{k}-1)}\,\sin^{2}\theta}{\left(1\mp\left|1-\frac{r_{\text{s}}}{\rho}\right|^{\zeta}\right)^{8}}\label{eq:det-2}
\end{align}
with $\mp$ corresponding the exterior/interior regions, respectively.
Fig. \ref{fig:aux-1} depicts a number of combinations of $\tilde{k}$
and $\zeta$ to be encountered in this paper. We deduce that 
\begin{equation}
\zeta+\tilde{k}-1=\begin{cases}
0 & \text{if }\tilde{k}=0\text{ or }\tilde{k}=-1\\
>0 & \text{if }\tilde{k}\in(-\infty,-1)\cup(0,+\infty)\\
<0 & \text{if }\tilde{k}\in(-1,0)
\end{cases}\label{eq:combo-1}
\end{equation}
\vskip4pt
\begin{figure}[!t]
\centering{}\includegraphics[scale=0.9]{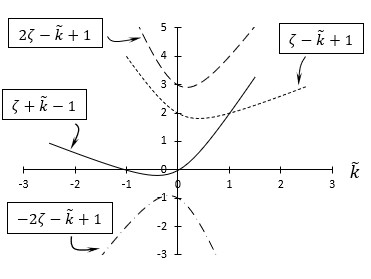}\caption{\label{fig:aux-1}Various combinations of $\tilde{k}$ and $\zeta$,
to be used in this paper, as functions of $\tilde{k}$.}
\end{figure}

\noindent \emph{Special cases:\vskip4pt}
\begin{itemize}
\item At $\tilde{k}=0$:
\begin{equation}
-\det g=\rho^{4}\sin^{2}\theta
\end{equation}
which is a result known in the Schwarzschild metric.
\item At $\tilde{k}=-1$:
\begin{equation}
-\det g=\frac{256\,r_{\text{s}}^{8}\,\sin^{2}\theta}{\rho^{4}\left(1\mp\left(1-\frac{r_{\text{s}}}{\rho}\right)^{2}\right)^{8}}
\end{equation}
with $\mp$ corresponding the exterior/interior regions, respectively.
The determinant with $\tilde{k}=-1$ is well-behaved for all $\rho\neq0$.
\end{itemize}
\noindent \emph{The asymptotic at the interior-exterior boundary,
$\rho\rightarrow r_{\text{s}}$:\vskip12pt}

Due to result \eqref{eq:combo-1}, we then have\small

\begin{equation}
\lim_{\rho\rightarrow r_{\text{s}}}\Bigl(-\det g|_{\theta=\frac{\pi}{2}}\Bigr)^{\frac{1}{4}}=\begin{cases}
\ r_{\text{s}} & \text{for }\tilde{k}=0\\
\ 4\,r_{\text{s}} & \text{for }\tilde{k}=-1\\
\ 0 & \text{for }\tilde{k}\leqslant-1\text{ or }\tilde{k}\geqslant0\\
\ +\infty & \text{for \ensuremath{\tilde{k}\in(-1,0)}}
\end{cases}
\end{equation}
\normalsize

\subsection{\label{subsec:Kretschmann}The Kretschmann invariant}

The Kretschmann scalar is given by\textcolor{black}{\small
\begin{align}
K & :=\mathcal{R}^{\mu\nu\rho\sigma}\mathcal{R}_{\mu\nu\rho\sigma}\label{eq:a-19a}\\
 & =\frac{2}{\zeta^{8}r_{\text{s}}^{4}}\left(1-\text{sgn}(x)\left|x\right|^{\zeta}\right)^{6}\left|x\right|^{2-4\zeta-2\tilde{k}}\times\nonumber \\
 & \biggl\{4\tilde{k}^{2}(\tilde{k}+1)\,\text{sgn}(x)\left|x\right|^{\zeta}+\zeta\,\Bigl(4\tilde{k}^{3}-5\tilde{k}^{2}-3\Bigr)\Bigl(1-\left|x\right|^{2\zeta}\Bigr)\nonumber \\
 & +\Bigl(9\tilde{k}^{4}-2\tilde{k}^{3}+10\tilde{k}^{2}+3\Bigr)\Bigl(1+\left|x\right|^{2\zeta}\Bigr)\biggr\}\label{eq:a-19b}
\end{align}
\normalsize in which $x:=1-\frac{r_{\text{s}}}{\rho}$.\vskip4pt}

By completing the square in the curly bracket in expression \eqref{eq:a-19b}
in terms of $\left|x\right|^{\zeta}$, one can show that the Kretschmann
scalar is positive-definite for all $\rho\in\mathbb{R}$ and all $k\in\mathbb{R}$.\vskip12pt

\noindent \emph{Special cases:\vskip4pt}
\begin{itemize}
\item At $\tilde{k}=0$, i.e. $\zeta=1$
\begin{equation}
K=\frac{12r_{\text{s}}^{2}}{\rho^{6}}
\end{equation}
recovering the result known in the Schwarzschild metric. It only has
a curvature singularity at the origin.
\item At $\tilde{k}=-1$, i.e. $\zeta=2$
\begin{equation}
K=\frac{3}{8r_{\text{s}}^{4}}\left(1\mp\left(1-\frac{r_{\text{s}}}{\rho}\right)^{2}\right)^{6}
\end{equation}
with $\mp$ corresponding to the exterior/interior regions, respectively.
It also only has a curvature singularity at the origin.
\end{itemize}
\noindent \emph{The asymptotics:\vskip12pt}
\begin{itemize}
\item As $\rho\rightarrow+\infty$, viz. $x\rightarrow1$,
\begin{align}
K & \simeq\frac{12}{\zeta^{8}r_{\text{s}}^{4}}\Bigl(\tilde{k}^{2}+1\Bigr)\Bigl(3\tilde{k}^{2}+1\Bigr)\left(1-\left|1-\frac{r_{\text{s}}}{\rho}\right|^{\zeta}\right)^{6}\\
 & \simeq12\Bigl(\tilde{k}^{2}+1\Bigr)\frac{r_{\text{s}}^{2}}{\rho^{6}}
\end{align}
which decays as $\rho^{-6}$ when $\rho\rightarrow+\infty$ for $\forall\tilde{k}\in\mathbb{R}$.
\item As $\rho\rightarrow0$, viz. $x\rightarrow\infty$,
\begin{align}
K & \simeq\frac{2}{\zeta^{8}r_{\text{s}}^{4}}\left|x\right|^{2\zeta-2\tilde{k}+2}\Biggl\{4\tilde{k}^{2}\left(\tilde{k}+1\right)\left|x\right|^{\zeta}+\nonumber \\
 & \left[\left(-4\tilde{k}^{3}+5\tilde{k}^{2}+3\right)\zeta+\left(9\tilde{k}^{4}-2\tilde{k}^{3}+10\tilde{k}^{2}+3\right)\right]\left|x\right|^{2\zeta}\Biggr\}
\end{align}
Since $\zeta=\left(1+3\tilde{k}^{2}\right)^{\frac{1}{2}}\geqslant1$
for $\forall\tilde{k}\in\mathbb{R}$, $\left|x\right|^{2\zeta}$ dominates
$\left|x\right|^{\zeta}$ as $x\rightarrow\infty$. Hence, as $\rho\rightarrow0$,
\small
\begin{align}
 & K\simeq\frac{2}{\zeta^{8}r_{\text{s}}^{4}}\left(\frac{r_{\text{s}}}{\rho}\right)^{2\left(2\zeta-\tilde{k}+1\right)}\times\nonumber \\
 & \left[\left(-4\tilde{k}^{3}+5\tilde{k}^{2}+3\right)\zeta+\left(9\tilde{k}^{4}-2\tilde{k}^{3}+10\tilde{k}^{2}+3\right)\right]
\end{align}
\normalsize From Fig. \ref{fig:aux-1}, $2\zeta-\tilde{k}+1>0\ \ \forall\tilde{k}\in\mathbb{R}$.
Thus $K$ diverges as $\rho^{-2\left(2\zeta-\tilde{k}+1\right)}$
when $\rho\rightarrow0$, for all $\forall\tilde{k}\in\mathbb{R}$.
\item As $\rho\rightarrow r_{\text{s}}$, viz. $x\rightarrow0$, if $\tilde{k}\neq0$
and $\tilde{k}\neq-1$
\begin{align}
 & K\simeq\frac{2}{\zeta^{8}r_{\text{s}}^{4}}\left|1-\frac{r_{\text{s}}}{\rho}\right|^{2\left(-2\zeta-\tilde{k}+1\right)}\times\nonumber \\
 & \left[\left(4\tilde{k}^{3}-5\tilde{k}^{2}-3\right)\zeta+\left(9\tilde{k}^{4}-2\tilde{k}^{3}+10\tilde{k}^{2}+3\right)\right]
\end{align}
From Fig. \ref{fig:aux-1}, $-2\zeta-\tilde{k}+1<0\ \ \forall k\in\mathbb{R}.$
Thus $K$ diverges as $\left|\rho-r_{\text{s}}\right|^{2\left(-2\zeta-\tilde{k}+1\right)}$
when $\rho\rightarrow r_{\text{s}}$, for $\tilde{k}\neq0$ and $\tilde{k}\neq-1$.
In sum, when $\rho\rightarrow r_{\text{s}}$,
\begin{equation}
K\simeq\begin{cases}
\ 12\,r_{\text{s}}^{-4} & \text{for }\tilde{k}=0\\
\ \frac{3}{8}\,r_{\text{s}}^{-4} & \text{for }\tilde{k}=-1\\
\ \left|\rho-r_{\text{s}}\right|^{2\left(-2\zeta-\tilde{k}+1\right)}\rightarrow+\infty & \text{otherwise}
\end{cases}\label{eq:K-asymptotic}
\end{equation}
\end{itemize}
The fact that, for $\tilde{k}\neq0$ and $\tilde{k}\neq-1$, the Kretschmann
scalar exhibits an additional singularity on the interior-exterior
boundary, $\rho=r_{\text{s}}$, besides the usual singularity at the
origin, is another \emph{novel} result.\vskip4pt

The plot for the Kretschmann scalar is shown in Fig. \ref{fig:Logarithm-of-Kretschmann}.
For clarity, we split the curves into two groups, one with negative
$\tilde{k}$ (upper panel), the other non-negative $\tilde{k}$ (lower
panel). The curves with $\tilde{k}=0$ and $\tilde{k}=-1$ are smooth
across the interior-exterior boundary, $\rho=r_{\text{s}}$. All other
curves show a divergence at $\rho=r_{\text{s}}$.
\begin{figure}[!t]
\noindent \begin{centering}
\includegraphics[scale=0.8]{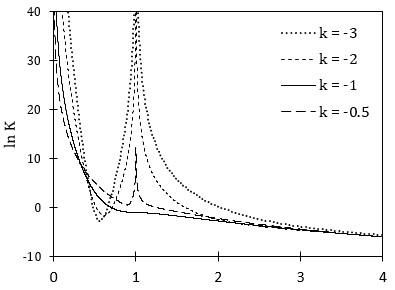}
\par\end{centering}
\noindent \begin{centering}
\includegraphics[scale=0.8]{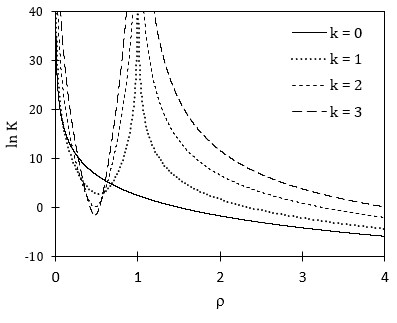}$\,$
\par\end{centering}
\caption{\label{fig:Logarithm-of-Kretschmann}Logarithm of Kretschmann invariant
as function of the new coordinate $\rho$, for various value of $\tilde{k}$.
For clarity, we plot the curves in two panels. In all cases, $r_{\text{s}}=1$.}
\end{figure}

\subsection{\label{subsec:Surface-area}Surface area of the interior-exterior
boundary of $\mathcal{R}^{2}$ spacetime: An anomalous behavior}

\begin{figure*}[!t]
\noindent \begin{centering}
\includegraphics[scale=0.9]{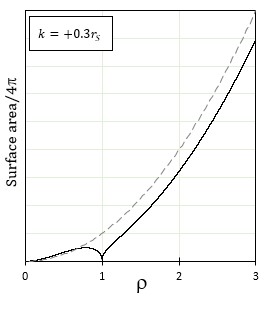}\includegraphics[scale=0.9]{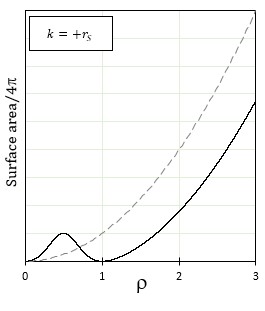}\includegraphics[scale=0.9]{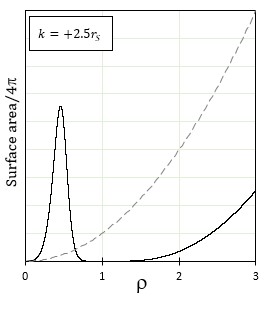}
\par\end{centering}
\noindent \begin{centering}
\includegraphics[scale=0.9]{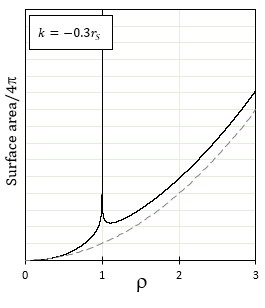}\includegraphics[scale=0.9]{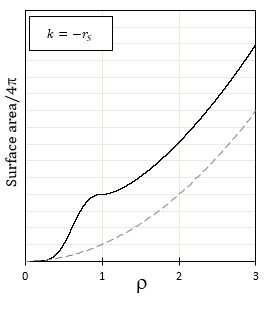}\includegraphics[scale=0.9]{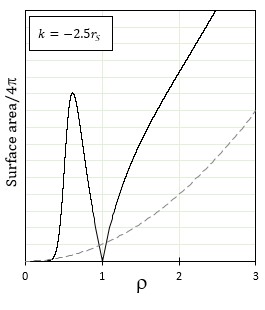}
\par\end{centering}
\caption{\label{fig:Surface-area}Surface area as function of the new coordinate
$\rho$. Upper panels: $\tilde{k}=0.3,1,2.5$. Lower panels: $\tilde{k}=-0.3,-1,-2.5$.
The dash line in each panel is the trivial $\tilde{k}=0$ benchmark,
$A=4\pi\rho^{2}$.}
\end{figure*}

For metric \eqref{eq:special-B-1}--\eqref{eq:special-B-2}, the
surface area of a two-dimensional sphere of ``radius'' $\rho$ is
\begin{align}
A & =4\pi\left|1-\frac{r_{\text{s}}}{\rho}\right|^{\tilde{k}}r^{2}(\rho)\nonumber \\
 & =4\pi\zeta^{2}r_{\text{s}}^{2}\frac{\left|1-\frac{r_{\text{s}}}{\rho}\right|^{\zeta+\tilde{k}-1}}{\left(1-\text{sgn}\left(1-\frac{r_{\text{s}}}{\rho}\right)\left|1-\frac{r_{\text{s}}}{\rho}\right|^{\zeta}\right)^{2}}
\end{align}
which is conveniently equal to
\begin{equation}
4\pi\rho\,\biggl(-\det g\Bigr|_{\theta=\pi/2}\biggr)^{\frac{1}{4}}
\end{equation}
The surface area $A$ and the determinant of $g$ thus share similar
behaviors. The plot of $A$ is shown in Fig. \ref{fig:Surface-area},
with the dotted parabola showing the regular $\tilde{k}=0$, in which
case $A=4\pi\rho^{2}$ since $r(\rho)=\rho$. Note the plots are not
symmetric with respect to $\tilde{k}$.\vskip12pt

\noindent \emph{The asymptotics:\vskip4pt}
\begin{itemize}
\item As $\rho\rightarrow+\infty$,
\begin{equation}
A\simeq4\pi\Bigl[\rho^{2}-\tilde{k}r_{\text{s}}\rho-\frac{r_{\text{s}}^{2}}{4}\tilde{k}(\tilde{k}-2)\Bigr]
\end{equation}
\item As $\rho\rightarrow0$,
\begin{equation}
A\simeq4\pi\zeta^{2}r_{\text{s}}^{2}\rho^{\zeta-\tilde{k}+1}
\end{equation}
From Fig. \ref{fig:aux-1}, $\zeta-\tilde{k}+1>0\ \ \forall\tilde{k}\in\mathbb{R}$.
Hence, $A\rightarrow0$ as $\rho\rightarrow0$ for $\forall\tilde{k}\in\mathbb{R}$.
\item As $\rho\rightarrow r_{\text{s}}$,
\begin{align}
A & \simeq4\pi\zeta^{2}r_{\text{s}}^{2}\left|1-\frac{r_{\text{s}}}{\rho}\right|^{\zeta+\tilde{k}-1}\\
 & =\begin{cases}
\ 4\pi r_{\text{s}}^{2} & \text{if }\tilde{k}=0\\
\ 16\pi r_{\text{s}}^{2} & \text{if }\tilde{k}=-1\\
\ 0 & \text{if }\tilde{k}\in(-\infty,-1)\cup(0,+\infty)\\
\ \text{+\ensuremath{\infty}} & \text{if }\tilde{k}\in\left(-1,0)\right)
\end{cases}\label{eq:surface-area}
\end{align}
\end{itemize}
\vskip6pt
\begin{rem}
Depending on the value of $\tilde{k}$, the shrinkage or divergence
of the surface area at $\rho=r_{\text{s}}$ is evident in Fig. \ref{fig:Surface-area}.
\end{rem}
\vskip6pt
\begin{rem}
The  interior-exterior boundary exhibits a peculiar property. Per
Eq. \eqref{eq:surface-area}, its surface area with $\tilde{k}\neq0$
drastically deviates from the customary $4\pi r_{\text{s}}^{2}$ expression,
thereby indicating that the Buchdahl parameter $k$ ``distorts''
the topology of spacetime around the interior-exterior boundary.
\end{rem}
\vskip6pt
\begin{rem}
The anomalous behavior of the surface area of the interior-exterior
boundary occurs in tandem with the curvature singularity at the interior-exterior
boundary in the Kretschmann invariant, Eq. \eqref{eq:K-asymptotic};
also see Section \ref{subsec:Non-Schwarzschild-structures}.
\end{rem}

\section{\label{sec:Application-II}Application II: $\ $Causal structure
of pure $\mathcal{R}^{2}$ spacetime}

This section analytically constructs the Kruskal-Szekeres (KS) diagram
of the \emph{special} Buchdahl-inspired metric attained in Lemma \ref{lem:lem-final}.
We adapt the usual practices that handle Schwarzschild black holes
-- by finding the tortoise coordinates, the Eddington-Finkelstein
coordinates, and the Kruskal-Szekeres coordinates \citep{Eddington-1924,Finkelstein-1958,Szekeres-1960,Kruskal-1960}
-- to the case at hand. Quantitative adjustments are needed. With
metric \eqref{eq:special-B-1}--\eqref{eq:special-B-3} involving
the parameter $\zeta$, we shall label these said coordinates by a
$\zeta-$ prefix. Fig. \ref{fig:KS-diagram} is the outcome of our
construction.

\subsection{\label{subsec:Tortoise}Constructing the $\boldsymbol{\zeta-}$tortoise
coordinate for pure $\mathcal{R}^{2}$ gravity}

The $\zeta-$tortoise coordinate $\rho^{*}(\rho)$ is defined as
\begin{equation}
d\rho^{*}:=\frac{r^{2}(\rho)}{\rho^{2}(1-\frac{r_{\text{s}}}{\rho})}d\rho
\end{equation}
\begin{align}
d\rho^{*} & =\zeta^{2}r_{\text{s}}^{2}\frac{\left|1-\frac{r_{\text{s}}}{\rho}\right|^{\zeta-1}\left(1-\frac{r_{\text{s}}}{\rho}\right)^{-1}}{\left(1-\text{sgn}(1-\frac{r_{\text{s}}}{\rho})\left|1-\frac{r_{\text{s}}}{\rho}\right|^{\zeta}\right)^{2}}\frac{d\rho}{\rho^{2}}\label{eq:tortoise-def}
\end{align}
The integral involves a Gaussian hypergeometric function. Let us define
\begin{equation}
z:=\text{sgn}\left(1-\frac{r_{\text{s}}}{\rho}\right)\left|1-\frac{r_{\text{s}}}{\rho}\right|^{\zeta}\label{eq:def-z}
\end{equation}

\paragraph{For $\rho>r_{\text{s}}$:}

\begin{align}
z & =\left(1-\frac{r_{\text{s}}}{\rho}\right)^{\zeta}>0\\
dz & =\zeta r_{\text{s}}\left(1-\frac{r_{\text{s}}}{\rho}\right)^{\zeta-1}\frac{d\rho}{\rho^{2}}
\end{align}
\begin{equation}
d\rho^{*}=\zeta r_{\text{s}}\frac{z^{-1/\zeta}}{(1-z)^{2}}dz
\end{equation}
giving (modulo an additive constant)
\begin{align}
\rho^{*} & =\frac{\zeta^{2}r_{\text{s}}}{\zeta-1}\,z^{1-\frac{1}{\zeta}}\,_{2}F_{1}\left(2,1-\frac{1}{\zeta};2-\frac{1}{\zeta};z\right)
\end{align}
\vskip4pt

\paragraph{For $0<\rho<r_{\text{s}}$:}

\begin{align}
z & =-\left(\frac{r_{\text{s}}}{\rho}-1\right)^{\zeta}<0\\
dz & =\zeta r_{\text{s}}\left(\frac{r_{\text{s}}}{\rho}-1\right)^{\zeta-1}\frac{d\rho}{\rho^{2}}
\end{align}
\begin{align}
d\rho^{*} & =-\zeta r_{\text{s}}\frac{\left(-z\right)^{-1/\zeta}}{(1-z)^{2}}dz\\
 & =\zeta r_{\text{s}}\frac{\left(-z\right)^{-1/\zeta}}{(1+(-z))^{2}}d(-z)\label{eq:d-tortois}
\end{align}
giving (modulo an additive constant)
\begin{align}
\rho^{*} & =\frac{\zeta^{2}r_{\text{s}}}{\zeta-1}\,\left(-z\right)^{1-\frac{1}{\zeta}}\,_{2}F_{1}\left(2,1-\frac{1}{\zeta};2-\frac{1}{\zeta};z\right)
\end{align}
\vskip4pt
\begin{figure}[!t]
\noindent \begin{centering}
\includegraphics[scale=0.8]{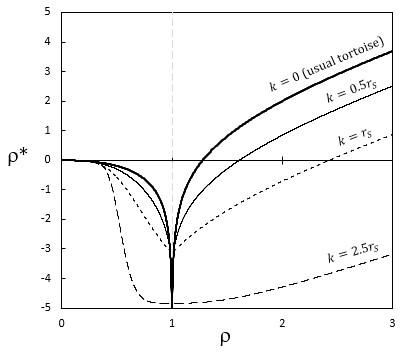}
\par\end{centering}
\caption{\label{fig:Tortoise-coordinate}The $\zeta-$tortoise coordinate of
Eq. \eqref{eq:tortoise-2} for various values of $\tilde{k}$ (with
$r_{\text{s}}=1$).}
\end{figure}

\noindent 
\begin{figure}[!t]
\noindent \begin{centering}
\includegraphics[scale=0.8]{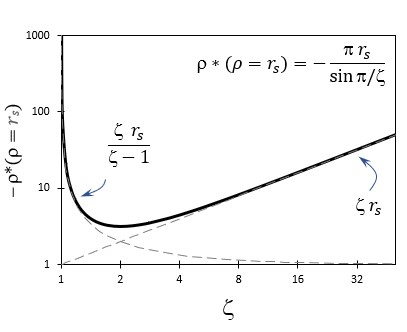}$\,$
\par\end{centering}
\caption{\label{fig:rho-star-0}$\rho^{*}(\rho=r_{\text{s}})$ as function
of $\zeta$; both axes in log scale (with $r_{\text{s}}=1$). The
two asymptopes are $\zeta/(\zeta-1)$ (for $\tilde{k}\rightarrow0$)
and $\zeta$ (for $\tilde{k}\rightarrow\infty$). Note that $\rho^{*}(\rho=r_{\text{s}})=-\pi$
for $\tilde{k}=1$.}
\end{figure}

In combination, we have the $\zeta-$tortoise coordinate (modulo an
additive constant) in terms of $z\in\mathbb{C}$ 
\begin{align}
\rho^{*} & =\frac{\zeta^{2}r_{\text{s}}}{\zeta-1}\,\left|z\right|^{1-\frac{1}{\zeta}}\,_{2}F_{1}\left(2,1-\frac{1}{\zeta};2-\frac{1}{\zeta};z\right)\label{eq:tortoise-1}
\end{align}
Furthermore, using Eq. \eqref{eq:d-tortois}, the difference
\begin{align}
\rho^{*}|_{\rho=0}-\rho^{*}|_{\rho=r_{\text{s}}} & =\int_{z=0}^{z=-\infty}\zeta r_{\text{s}}\frac{\left(-z\right)^{-1/\zeta}d(-z)}{(1+(-z))^{2}}\\
 & =\frac{\pi r_{\text{s}}}{\sin(\pi/\zeta)}\label{eq:diff-rho}
\end{align}
We shall choose the additive constant such that the $\zeta-$tortoise
coordinate vanishes at $\rho=0$. Using \eqref{eq:def-z} and \eqref{eq:diff-rho},
\eqref{eq:tortoise-1} produces\small

\begin{align}
\rho^{*} & =-\frac{\pi r_{\text{s}}}{\sin(\pi/\zeta)}+\frac{\zeta^{2}r_{\text{s}}}{\zeta-1}\left|1-\frac{r_{\text{s}}}{\rho}\right|^{\zeta-1}\times\nonumber \\
 & \ \ \ \,_{2}F_{1}\left(2,1-\frac{1}{\zeta};2-\frac{1}{\zeta};\text{sgn}\left(1-\frac{r_{\text{s}}}{\rho}\right)\left|1-\frac{r_{\text{s}}}{\rho}\right|^{\zeta}\right)\label{eq:tortoise-2}
\end{align}
\normalsize For $\tilde{k}\neq0$, the variable $\rho^{*}$ is continuous
across $\rho=r_{\text{s}}$ and $\rho^{*}|_{\rho=r_{\text{s}}}=-\frac{\pi r_{\text{s}}}{\sin(\pi/\zeta)}$.
In the complex plane $z\in\mathbb{C}$, the Gaussian hypergeometric
function $\,_{2}F_{1}\left(2,1-1/\zeta;2-1/\zeta;z\right)$ has a
branch point at $z=1$; expression \eqref{eq:tortoise-2} is thus
applicable for $z\in\mathbb{R}^{+}$ and $\tilde{k}\neq0$. See Appendix
\ref{sec:Gaussian-hypergeometric-function} for more information on
the hypergeometric function at play.\vskip4pt

For $\tilde{k}=0$, i.e. $\zeta=1$, the tortoise coordinate \eqref{eq:tortoise-2}
duly recovers 
\begin{equation}
\rho^{*}=\rho+r_{\text{s}}\ln\left|\frac{\rho-r_{\text{s}}}{r_{\text{s}}}\right|\label{eq:usual-tortoise}
\end{equation}
which diverges at $\rho=r_{\text{s}}$. See Appendix \ref{sec:Limit-k-0}
for derivation.\vskip4pt

Fig. \ref{fig:Tortoise-coordinate} plots the $\zeta-$tortoise coordinate
for various values of $\tilde{k}$. The case of $\tilde{k}=0$ is
the usual tortoise coordinate, Eq. \eqref{eq:usual-tortoise}. Fig.
\ref{fig:rho-star-0} shows the value $-\rho^{*}|_{\rho=r_{\text{s}}}=\frac{\pi r_{\text{s}}}{\sin(\pi/\zeta)}$
which asymptotes $\frac{\zeta r_{\text{s}}}{\zeta-1}$ for $\zeta\gtrsim1$
and $\zeta r_{\text{s}}$ for large $\zeta$.

\subsection{\label{subsec:EF-coordinates}Constructing the $\boldsymbol{\zeta-}$Eddington-Finkelstein
coordinates for pure $\mathcal{R}^{2}$ gravity}

Let us define the advanced and retarded $\zeta-$Eddington-Finkelstein
coordinates, per
\begin{align}
v & :=t+\rho^{*}\\
u & :=t-\rho^{*}
\end{align}
Metric \eqref{eq:special-B-1}, expressed in these new coordinates,
becomes \small
\begin{align}
ds^{2} & =\left|1-\frac{r_{\text{s}}}{\rho}\right|^{\tilde{k}}\times\nonumber \\
 & \biggl\{-\left(1-\frac{r_{\text{s}}}{\rho}\right)dv^{2}+\frac{r^{2}(\rho)}{\rho^{2}}\left(2dv\,d\rho+\rho^{2}d\Omega^{2}\right)\biggr\}
\end{align}
and
\begin{align}
ds^{2} & =\left|1-\frac{r_{\text{s}}}{\rho}\right|^{\tilde{k}}\times\nonumber \\
 & \biggl\{-\left(1-\frac{r_{\text{s}}}{\rho}\right)du^{2}+\frac{r^{2}(\rho)}{\rho^{2}}\left(-2du\,d\rho+\rho^{2}d\Omega^{2}\right)\biggr\}
\end{align}
\normalsize Also
\begin{equation}
du\,dv=dt^{2}-\frac{r^{4}(\rho)}{\rho^{4}\left(1-\frac{r_{\text{s}}}{\rho}\right)^{2}}d\rho^{2}
\end{equation}
thence
\begin{equation}
ds^{2}=\left|1-\frac{r_{\text{s}}}{\rho}\right|^{\tilde{k}}\biggl\{-\left(1-\frac{r_{\text{s}}}{\rho}\right)du\,dv+r^{2}(\rho)\,d\Omega^{2}\biggr\}
\end{equation}
In the advanced $\zeta-$Eddington-Finkelstein coordinate, the null
geodesics ($ds^{2}=0$) along the radial direction amount to
\begin{equation}
\frac{dv}{d\rho}=\begin{cases}
0 & \text{(infalling)}\\
\frac{2r^{2}(\rho)}{\rho^{2}\left(1-\frac{r_{\text{s}}}{\rho}\right)}=2\frac{d\rho^{*}}{d\rho} & \text{(outgoing)}
\end{cases}\label{eq:EF-slope}
\end{equation}
thus
\begin{equation}
v=\begin{cases}
\text{const} & \text{(infalling)}\\
2\rho^{*}+\text{const} & \text{(outgoing)}
\end{cases}
\end{equation}

\noindent 
\begin{figure*}[!t]
\noindent \begin{centering}
\includegraphics[scale=0.96]{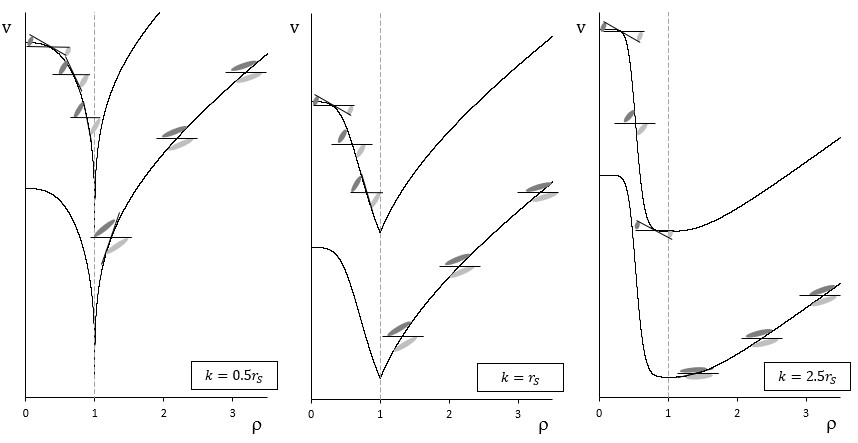}
\par\end{centering}
\caption{\label{fig:Lightcones}Light cones in the $(\rho,v)$ plane (with
$r_{\text{s}}=1$), for $\tilde{k}=0.5,1,2.5$. We choose these values
of $\tilde{k}$ as representatives for the three cases discussed in
the text.}
\end{figure*}

\subsection{\label{subsec:Behavior-of-lightcones}Behavior of light cones across
the interior-exterior boundary of a pure $\mathcal{R}^{2}$ spacetime}

In the advanced $\zeta-$Eddington-Finkelstein coordinates, per \eqref{eq:EF-slope}
and \eqref{eq:tortoise-def}, the outgoing null path has the slope\small
\begin{align}
\frac{dv}{d\rho} & =2\frac{d\rho^{*}}{d\rho}\\
 & =\frac{2\zeta^{2}}{1-\frac{r_{\text{s}}}{\rho}}\left(\frac{r_{\text{s}}}{\rho}\right)^{2}\frac{\left|1-\frac{r_{\text{s}}}{\rho}\right|^{\zeta-1}}{\left(1-\text{sgn}\left(1-\frac{r_{\text{s}}}{\rho}\right)\left|1-\frac{r_{\text{s}}}{\rho}\right|^{\zeta}\right)^{2}}
\end{align}
\normalsize Thus the outgoing null path exhibits the following asymptotic
behaviors
\begin{equation}
\frac{dv}{d\rho}\simeq\begin{cases}
\ -2\zeta^{2}r_{\text{s}}^{2(1-\zeta)}\rho^{\zeta}\rightarrow0 & \text{as }\rho\rightarrow0\\
\ +2\zeta^{2}r_{\text{s}}^{2-\zeta}\left|\rho-r_{\text{s}}\right|^{\zeta-2} & \text{as }\rho\rightarrow r_{\text{s}}^{+}\\
\ -2\zeta^{2}r_{\text{s}}^{2-\zeta}\left|\rho-r_{\text{s}}\right|^{\zeta-2} & \text{as }\rho\rightarrow r_{\text{s}}^{-}\\
\ +2 & \text{as }\rho\rightarrow\infty
\end{cases}
\end{equation}
Fig. \ref{fig:Lightcones} depicts the behavior of the light cones
in the $(v,\rho)$ plane. Concerning the light cone behavior across
the interior-exterior boundary, there are three cases:
\begin{casenv}
\item For $\left|\tilde{k}\right|<1$, viz. $\zeta<2$
\begin{equation}
\frac{dv}{d\rho}\rightarrow\pm\infty\text{ as }\rho\rightarrow r_{\text{s}}^{\pm}
\end{equation}
The light cone ``flips over'' across the interior-exterior boundary
as usual. This case includes the standard Schwarzschild metric, viz.
$\tilde{k}=0$. See the leftmost panel in Fig. \ref{fig:Lightcones}.
\item For $\left|\tilde{k}\right|>1$, viz. $\zeta>2$
\begin{equation}
\frac{dv}{d\rho}\rightarrow0^{\pm}\text{ as }\rho\rightarrow r_{\text{s}}^{\pm}
\end{equation}
This case is a peculiar situation. The light cone first ``flattens
out'' when approaching the interior-exterior boundary from the exterior.
Upon passing the interior-exterior boundary, the light cone makes
sudden ``collapse'' to an single line, $dv=0$, then gradually ``re-widens''
when entering into the interior. See the rightmost panel in Fig. \ref{fig:Lightcones}.
\item For $\left|\tilde{k}\right|=1$, hence $\zeta=2$
\begin{equation}
\frac{dv}{d\rho}\rightarrow\pm8\text{ as }\rho\rightarrow r_{\text{s}}^{\pm}
\end{equation}
The light cones changes its slope in a step-wise fashion. See the
middle panel in Fig. \ref{fig:Lightcones}.
\end{casenv}
\begin{rem}
\label{rem:lightcones}From Fig. \ref{fig:Lightcones}, in every situation,
all light paths and time-like paths inside the interior-exterior boundary
would eventually reach the origin; they cannot escape from the interior.
In the exterior region, the outgoing light path can escape to infinity.
These results shall be confirmed by way of the Kruskal-Szekeres diagram
in Sec. \ref{subsec:KS-diagram} below.
\end{rem}

\subsection{\label{subsec:KS-diagram}Constructing the $\boldsymbol{\zeta-}$Kruskal-Szekeres
coordinates for pure $\mathcal{R}^{2}$ gravity}

Most of the procedure originally advanced by Kruskal and Szekeres
for Schwarzschild black holes \citep{Kruskal-1960,Szekeres-1960}
can be re-purposed for pure $\mathcal{R}^{2}$ spacetime. We shall
consider the exterior and interior regions separately.

\subsubsection*{\textbf{The exterior}}

For $\rho>r_{\text{s}}$, let us define
\begin{align}
X & :=\frac{1}{2}\left(e^{\frac{v}{2r_{\text{s}}}}+e^{-\frac{u}{2r_{\text{s}}}}\right)\\
T & :=\frac{1}{2}\left(e^{\frac{v}{2r_{\text{s}}}}-e^{-\frac{u}{2r_{\text{s}}}}\right)
\end{align}
then
\begin{align}
X & =e^{\frac{\rho^{*}}{2r_{\text{s}}}}\cosh\frac{t}{2r_{\text{s}}}\\
T & =e^{\frac{\rho^{*}}{2r_{\text{s}}}}\sinh\frac{t}{2r_{\text{s}}}
\end{align}
\begin{align}
T^{2}-X^{2} & =-e^{\frac{\rho^{*}}{r_{\text{s}}}}\\
\frac{T}{X} & =\tanh\frac{t}{2r_{\text{s}}}
\end{align}
and \small
\begin{align}
dX & =\frac{e^{\frac{\rho^{*}}{2r_{\text{s}}}}}{2r_{\text{s}}}\left[\frac{r^{2}(\rho)}{\rho^{2}\left(1-\frac{r_{\text{s}}}{\rho}\right)}\cosh\frac{t}{2r_{\text{s}}}d\rho+\sinh\frac{t}{2r_{\text{s}}}dt\right]\\
dT & =\frac{e^{\frac{\rho^{*}}{2r_{\text{s}}}}}{2r_{\text{s}}}\left[\frac{r^{2}(\rho)}{\rho^{2}\left(1-\frac{r_{\text{s}}}{\rho}\right)}\sinh\frac{t}{2r_{\text{s}}}d\rho+\cosh\frac{t}{2r_{\text{s}}}dt\right]
\end{align}
hence
\begin{equation}
dT^{2}-dX^{2}=\frac{e^{\frac{\rho^{*}}{r_{\text{s}}}}}{4r_{\text{s}}^{2}}\left[dt^{2}-\frac{r^{4}(\rho)}{\rho^{4}\left(1-\frac{r_{\text{s}}}{\rho}\right)^{2}}d\rho^{2}\right]
\end{equation}
giving
\begin{align}
ds^{2} & =\left|1-\frac{r_{\text{s}}}{\rho}\right|^{\tilde{k}}\times\nonumber \\
 & \left\{ -4r_{\text{s}}^{2}e^{-\frac{\rho^{*}}{r_{\text{s}}}}\left(1-\frac{r_{\text{s}}}{\rho}\right)\left(dT^{2}-dX^{2}\right)+r^{2}(\rho)d\Omega^{2}\right\} 
\end{align}
\normalsize

\subsubsection*{\textbf{The interior}}

For $\rho<r_{\text{s}}$, let us define
\begin{align}
X & :=\frac{1}{2}\left(e^{\frac{v}{2r_{\text{s}}}}-e^{-\frac{u}{2r_{\text{s}}}}\right)\\
T & :=\frac{1}{2}\left(e^{\frac{v}{2r_{\text{s}}}}+e^{-\frac{u}{2r_{\text{s}}}}\right)
\end{align}
then
\begin{align}
X & =e^{\frac{\rho^{*}}{2r_{\text{s}}}}\sinh\frac{t}{2r_{\text{s}}}\\
T & =e^{\frac{\rho^{*}}{2r_{\text{s}}}}\cosh\frac{t}{2r_{\text{s}}}
\end{align}
\begin{align}
T^{2}-X^{2} & =+e^{\frac{\rho^{*}}{r_{\text{s}}}}\\
\frac{T}{X} & =\left(\tanh\frac{t}{2r_{\text{s}}}\right)^{-1}
\end{align}
and \small
\begin{align}
dX & =\frac{e^{\frac{\rho^{*}}{2r_{\text{s}}}}}{2r_{\text{s}}}\left[\frac{r^{2}(\rho)}{\rho^{2}\left(1-\frac{r_{\text{s}}}{\rho}\right)}\sinh\frac{t}{2r_{\text{s}}}d\rho+\cosh\frac{t}{2r_{\text{s}}}dt\right]\\
dT & =\frac{e^{\frac{\rho^{*}}{2r_{\text{s}}}}}{2r_{\text{s}}}\left[\frac{r^{2}(\rho)}{\rho^{2}\left(1-\frac{r_{\text{s}}}{\rho}\right)}\cosh\frac{t}{2r_{\text{s}}}d\rho+\sinh\frac{t}{2r_{\text{s}}}dt\right]
\end{align}
hence
\begin{equation}
dT^{2}-dX^{2}=-\frac{e^{\frac{\rho^{*}}{r_{\text{s}}}}}{4r_{\text{s}}^{2}}\left[dt^{2}-\frac{r^{4}(\rho)}{\rho^{4}\left(1-\frac{r_{\text{s}}}{\rho}\right)^{2}}d\rho^{2}\right]
\end{equation}
giving
\begin{align}
ds^{2} & =\left|1-\frac{r_{\text{s}}}{\rho}\right|^{\tilde{k}}\times\nonumber \\
 & \left\{ +4r_{\text{s}}^{2}e^{-\frac{\rho^{*}}{r_{\text{s}}}}\left(1-\frac{r_{\text{s}}}{\rho}\right)\left(dT^{2}-dX^{2}\right)+r^{2}(\rho)d\Omega^{2}\right\} 
\end{align}
\normalsize

\subsubsection*{\textbf{Combination of both regions}}

The \emph{special} Buchdahl-inspired metric in the $\zeta-$Kruskal-Szekeres
(KS) coordinates is thus \vskip-8pt

\small

\begin{align}
ds^{2} & =\left|1-\frac{r_{\text{s}}}{\rho}\right|^{\tilde{k}}\times\nonumber \\
 & \left\{ -4r_{\text{s}}^{2}e^{-\frac{\rho^{*}}{r_{\text{s}}}}\left|1-\frac{r_{\text{s}}}{\rho}\right|\left(dT^{2}-dX^{2}\right)+r^{2}(\rho)d\Omega^{2}\right\} \label{eq:KS-metric}
\end{align}
\normalsize and
\begin{align}
T^{2}-X^{2} & =-\text{sgn(\ensuremath{\rho-r_{\text{s}})}}e^{\frac{\rho^{*}}{r_{\text{s}}}}\label{eq:T-X-1}\\
\frac{T}{X} & =\left(\tanh\frac{t}{2r_{\text{s}}}\right)^{\text{sgn(\ensuremath{\rho-r_{\text{s}})}}}\label{eq:T-X-2}
\end{align}
\vskip12pt
\begin{rem}
For the case $\tilde{k}=0$, substituting $\rho^{*}=\rho+r_{\text{s}}\ln\left|\frac{\rho}{r_{\text{s}}}-1\right|$
and $r(\rho)=\rho$ into \eqref{eq:KS-metric}, we get\small
\begin{equation}
ds_{\text{ (KS)}}^{2}=-4r_{\text{s}}^{3}\frac{e^{-\frac{\rho}{r_{\text{s}}}}}{\rho}\left(dT^{2}-dX^{2}\right)+\rho^{2}d\Omega^{2}\label{eq:Schwa}
\end{equation}
which is the usual KS result for Schwarzschild black holes. \normalsize
\end{rem}

\subsection{\label{subsec:Features}Features of the $\boldsymbol{\zeta-}$Kruskal-Szekeres
diagram: A new \textquotedblleft virtual\textquotedblright{} region}

\begin{figure*}[!t]
\noindent \begin{centering}
\includegraphics[scale=0.7]{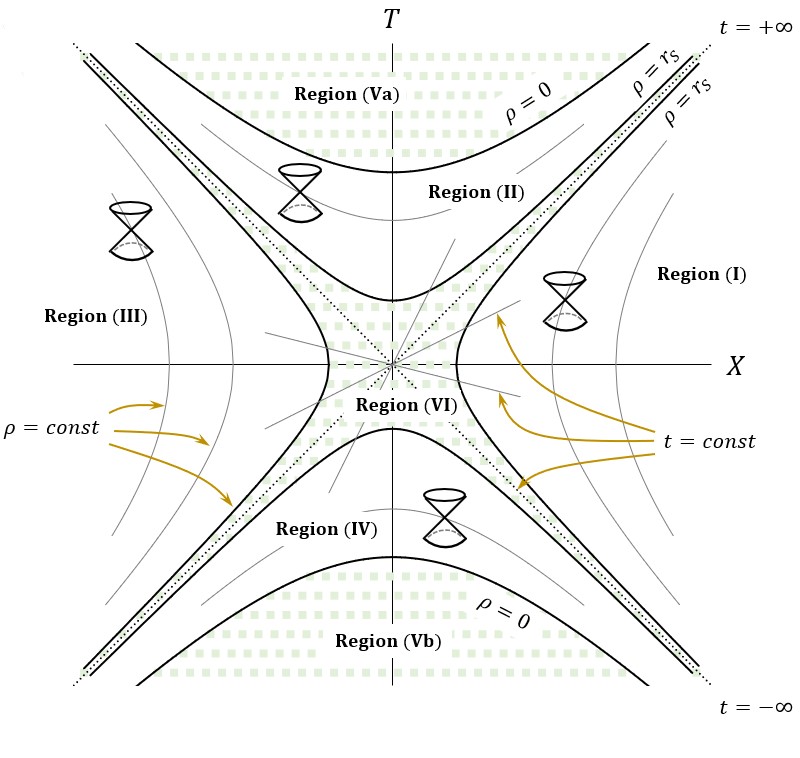}
\par\end{centering}
\caption{\label{fig:KS-diagram}Kruskal-Szekeres diagram for $\tilde{k}\protect\neq0$.
The \textquotedblleft gulf\textquotedblright{} shown as Region (VI)
is a new feature. See text for explanations.}
\end{figure*}

Restricting to the radial direction, viz. $d\theta=d\phi=0$, metric
\eqref{eq:KS-metric} is
\begin{equation}
ds^{2}=-4r_{\text{s}}^{2}e^{-\frac{\rho^{*}}{r_{\text{s}}}}\left|1-\frac{r_{\text{s}}}{\rho}\right|^{1+\tilde{k}}\left(dT^{2}-dX^{2}\right)\label{eq:zeta-KS}
\end{equation}
The $\zeta-$Kruskal-Szekeres ($\zeta-$KS for short) plane for metric
\eqref{eq:zeta-KS} is shown in Fig. \ref{fig:KS-diagram}. A number
of key features are:
\begin{itemize}
\item Similarly to the usual KS diagram, the $\zeta-$KS diagram is conformally
Minkowski.
\item The null geodesics are:
\begin{equation}
dX=\pm dT
\end{equation}
Light thus travels on the $45^{\circ}$ lines in the $\zeta-$KS plane. 
\item The $\zeta-$KS diagram retains, qualitatively, most features of the
causal structure established for the usual KS diagram. There are quantitative
changes; see below.
\item A constant--$\rho$ contour corresponds to a hyperbola, whereas the
constant--$t$ contour to a straight line through the origin of the
$(T,X)$ plane.
\item The coordinate origin $\rho=0$ amounts to, per 
\begin{equation}
T^{2}-X^{2}=1
\end{equation}
because $\rho^{*}(\rho=0)=0$.
\item The interior-exterior boundary $\rho=r_{\text{s}}$ amounts to \emph{two}
\emph{distinct} hyperbolae, one for the interior and the other the
exterior, per
\begin{equation}
T^{2}-X^{2}=\begin{cases}
-e^{-\frac{\pi}{\sin\frac{\pi}{\zeta}}} & \text{for exterior}\\
+e^{-\frac{\pi}{\sin\frac{\pi}{\zeta}}} & \text{for interior}
\end{cases}\label{eq:hyperbolae}
\end{equation}
Note that each hyperbola comprises of two separate branches on its
own.
\item For $\tilde{k}=0$, viz. $\zeta=1$, the hyperbolae \eqref{eq:hyperbolae}
degenerate to two straight lines
\begin{equation}
T=\pm X
\end{equation}
as expected for Schwarzschild black holes.
\item Region (I) is the exterior, mapped into the $\zeta-$KS plane extended
up to the right branch of the hyperbola $T^{2}-X^{2}=-e^{-\frac{\pi}{\sin\frac{\pi}{\zeta}}}$.
\item Region (II) is the interior, mapped into the $\zeta-$KS plane, extended
up to the upper branch of the hyperbola $T^{2}-X^{2}=+e^{-\frac{\pi}{\sin\frac{\pi}{\zeta}}}$.
\item Regions (III) and (IV) are time-reverse images of Regions (I) and
(II).
\item Regions (Va) and (Vb) (shaded by dots) are unphysical, viz. $\rho<0$.
\item What is new is Region (VI) (also shaded in dots) that sandwiches between
the four hyperbola branches given by \eqref{eq:hyperbolae}.
\end{itemize}
\noindent In Region (II), all timelike and null trajectories will
eventually hit the origin, denoted by the hyperbola, $\rho=0$. Nothing
can escape from the interior. In Region (I), outgoing light paths
would be able to escape to infinity. These observations are in agreement
with the result obtained in Sec. \ref{subsec:Behavior-of-lightcones};
see Remark \ref{rem:lightcones}.\vskip4pt

Incoming light paths from Region (I) must enter Region (II) by ``\emph{bypassing}''
Region (VI). An infalling object (or light wave) would hit the interior-exterior
boundary $\rho=r_{\text{s}}$ on the side of Region (I) then reappear
on the interior-exterior boundary on the side of Region (II). The
``transit'' -- if there is any -- within Region (VI) is not visible,
thus ``virtual'', for an outside observer from afar.\vskip4pt

Region (VI) appears as a ``gulf'' in the $(T,X)$ coordinate system
but it does not correspond to any region in the $(t,\rho)$ coordinate
system. When $\tilde{k}\rightarrow0$, the ``gulf'' shrinks toward
the 2 lines $T=\pm X$. Given that the $\zeta-$KS diagram is the
maximal extension of the \emph{special} Buchdahl-inspired metric,
the emergence of Region (VI) is a highly curious feature, signaling
potential new physics that takes place on the interior-exterior boundary
of $\mathcal{R}^{2}$ spacetime. Taken altogether, the singularity
on the interior-exterior boundary in the Kretschmann invariant, the
anomalous behavior of the surface area of the interior-exterior boundary,
and the ``gulf'' in the $\zeta-$KS diagram indicate that the topology
of $\mathcal{R}^{2}$-gravity spacetime around a mass source undergoes
fundamental alterations when the Buchdahl parameter $k$ is in presence. 

\subsection{\label{subsec:A-conjecture}A conjecture}

While the intuitions about the causal structure built for the usual
KS diagram remain intact for its $\zeta-$KS enlargement, the appearance
of the ``virtual'' Region (VI) would beg for further examinations.
We shall venture some ideas going forward.\vskip4pt

Let us recall that in the usual KS diagram, the tortoise coordinate
is ``bifurcated'' into two branches, separately for the exterior
and for the interior, per
\begin{equation}
\rho^{*}=\begin{cases}
\rho+r_{\text{s}}\ln\left(\rho-r_{\text{s}}\right) & \text{for exterior}\\
\rho+r_{\text{s}}\ln\left(r_{\text{s}}-\rho\right) & \text{for interior}
\end{cases}
\end{equation}
For the $\zeta-$tortoise coordinate obtained in Sec. \ref{subsec:Tortoise},
this ``bifurcation'' issue is somewhat mitigated if $\tilde{k}\neq0$,
viz. $\zeta>1$. To see this, let us recall Eqs. \eqref{eq:def-z}
and \eqref{eq:tortoise-1} with the additive constant term being suppressed
for convenience

\begin{align}
\rho^{*} & =\frac{\zeta^{2}r_{\text{s}}}{\zeta-1}\left|z\right|^{1-\frac{1}{\zeta}}\,_{2}F_{1}\left(2,1-\frac{1}{\zeta};2-\frac{1}{\zeta};z\right)\label{eq:tortoise-3}\\
z & :=\text{\text{sgn}\ensuremath{\left(1-\frac{r_{\text{s}}}{\rho}\right)\left|1-\frac{r_{\text{s}}}{\rho}\right|^{\zeta}}}\label{eq:tortoise-4}
\end{align}
in which $z\in\mathbb{R}$ (here, we consider $\rho\in\mathbb{R}$
unrestricted). The Gaussian hypergeometric function $\,_{2}F_{1}\left(2,1-1/\zeta;2-1/\zeta;z\right)$,
when extended onto the complex plane $z\in\mathbb{C}$, has a branch
point at $z=1$ (corresponding to $\rho=\pm\infty)$. For $\tilde{k}=0$,
Eqs. \eqref{eq:tortoise-3}--\eqref{eq:tortoise-4} recovers the
usual tortoise coordinate (see Appendix \ref{sec:Limit-k-0}):
\begin{align}
\rho^{*} & =\frac{r_{\text{s}}}{1-z}+r_{\text{s}}\ln\left|\frac{z}{1-z}\right|\\
 & =\rho+r_{\text{s}}\ln\left|\frac{\rho}{r_{\text{s}}}-1\right|
\end{align}
which is \emph{not} analytic across $z=0$, a point that separates
the exterior from the interior, as alluded to above.\vskip4pt

To proceed, let us define the following auxiliary variable for $z\in\mathbb{C}$,
\begin{align}
\tilde{\rho} & :=\frac{\zeta^{2}r_{\text{s}}}{\zeta-1}\,z^{1-\frac{1}{\zeta}}\,_{2}F_{1}\left(2,1-\frac{1}{\zeta};2-\frac{1}{\zeta};z\right)
\end{align}
in which the $z^{1-\frac{1}{\zeta}}$ term has replaced the $\left|z\right|^{1-\frac{1}{\zeta}}$
term in Eq. \eqref{eq:tortoise-3}. The $\zeta-$tortoise coordinate
is thus

\begin{equation}
\rho^{*}(z)=\left(\frac{\left|z\right|}{z}\right)^{1-\frac{1}{\zeta}}\tilde{\rho}(z)=e^{-i\left(1-\frac{1}{\zeta}\right)\arg z}\tilde{\rho}(z)\label{eq:tortoise-5}
\end{equation}
which, when restricted to $z\in\mathbb{R}$, yields two separate branches
\begin{equation}
\rho^{*}(z)=\begin{cases}
\tilde{\rho}(z) & \text{exterior}\\
e^{-i\left(1-\frac{1}{\zeta}\right)\pi}\tilde{\rho}(z) & \text{interior}
\end{cases}
\end{equation}
The variable $\tilde{\rho}$, when defined in the complex plane $z\in\mathbb{C}$,
might be used to ``analytically continue'' from the interior ($z\in\mathbb{R}^{-}$)
to the exterior ($z\in\mathbb{R}^{+}$). In the meantime, the phase
factor $e^{-i\left(1-\frac{1}{\zeta}\right)\arg z}$ in Eq. \eqref{eq:tortoise-5}
isolates the non-analytical part in $\rho^{*}$ from the ``well-behaved''
$\tilde{\rho}$, hence lessening the ``bifurcation'' issue mentioned
above.\vskip4pt

Concerning $\tilde{\rho}$, for a general value of $\tilde{k}\neq0$,
the exponent $1-\frac{1}{\zeta}$ is strictly confined within the
range $(0,1)$; the term $z^{1-\frac{1}{\zeta}}$ is thus multi-valued
and the $z=0$ point represents a branch point. (N.B: the function
$\,_{2}F_{1}$ itself contains another branch point at $z=1$.)\vskip4pt

We conjecture that the variable $\tilde{\rho}$, defined as a function
of $z$ in the complex plane $\mathbb{C}$, could serve as a tool
to tackle the ``gulf'' in the $\zeta-$KS diagram, a topic worthwhile
of future research.
\begin{conjecture}
\label{conj:my-conjecture}The auxiliary variable
\begin{align}
\tilde{\rho} & :=\frac{\zeta^{2}r_{\text{s}}}{\zeta-1}\,z^{1-\frac{1}{\zeta}}\,_{2}F_{1}\left(2,1-\frac{1}{\zeta};2-\frac{1}{\zeta};z\right)
\end{align}
with $z\in\mathbb{C}\smallsetminus\mathbb{R}$, viz. $\text{Im }z\neq0$,
represents the ``virtual'' Region (VI) in the $\zeta-$Kruskal-Szekeres
diagram.
\end{conjecture}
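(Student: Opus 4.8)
The plan is first to replace the heuristic word ``represents'' in the conjecture by a precise statement, and only then to attempt a proof of that statement. The natural precisification is to construct an explicit map $\Phi$ from the slit complex plane $\{z\in\mathbb{C}:\text{Im}\,z\neq0\}$ into the $(T,X)$-plane by analytically continuing the Kruskal--Szekeres construction of Sec.~\ref{subsec:KS-diagram}, and then to show that the image $\Phi(\{\text{Im}\,z\neq0\})$ is exactly the open gulf, Region (VI). The guiding observation is that the two hyperbolae bounding Region (VI) in Eq.~\eqref{eq:hyperbolae} are precisely the images of the real axis $z\in\mathbb{R}$: since $\rho=r_{\text{s}}\Leftrightarrow z=0$ and $\rho^{*}|_{\rho=r_{\text{s}}}=-\pi r_{\text{s}}/\sin(\pi/\zeta)$, one expects $\text{Im}\,z\neq0$ to fill the interior $|T^{2}-X^{2}|<e^{-\pi/\sin(\pi/\zeta)}$ lying between them. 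The dimensional bookkeeping is encouraging: $\{z:\text{Im}\,z\neq0\}$ and the gulf are both two-real-dimensional, so a map between them is at least dimensionally consistent.

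The key steps, in order, would be: (i) establish the analytic structure of $\tilde{\rho}(z)$, locating the two branch points at $z=0$ (from $z^{1-1/\zeta}$, since $1-1/\zeta\in(0,1)$ for $\tilde{k}\neq0$) and at $z=1$ (from $\,_{2}F_{1}$), via the standard Euler integral representation of the hypergeometric function; (ii) continue $\rho^{*}(z)=(|z|/z)^{1-1/\zeta}\,\tilde{\rho}(z)$ off the real axis and track $e^{\rho^{*}/r_{\text{s}}}$ as $z$ sweeps the upper (or lower) half-plane, following how the phase $e^{-i(1-1/\zeta)\arg z}$ of Eq.~\eqref{eq:tortoise-5} rotates between the exterior branch ($\arg z=0$) and the interior branch ($\arg z=\pi$); (iii) feed the continued $\rho^{*}$ into the continued KS relations $T^{2}-X^{2}=-\text{sgn}(\rho-r_{\text{s}})\,e^{\rho^{*}/r_{\text{s}}}$ and $T/X=(\tanh\tfrac{t}{2r_{\text{s}}})^{\text{sgn}(\rho-r_{\text{s}})}$, and determine the resulting locus in the $(T,X)$-plane; (iv) verify that the boundary values on $z\in\mathbb{R}^{\pm}$ reproduce the two hyperbolae of Eq.~\eqref{eq:hyperbolae} and that the branch-cut jump at $z=0$ accounts for the finite gap $\pi r_{\text{s}}/\sin(\pi/\zeta)$ already computed in Eq.~\eqref{eq:diff-rho}. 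That this gap arose through an Euler--Beta / reflection-formula integral is a strong hint that contour deformation in the $z$-plane is indeed the correct mechanism generating Region (VI).

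The main obstacle is step (iii): the KS map was derived for \emph{real} $\rho^{*}$ and \emph{real} $t$, so for complex $z$ both $\rho^{*}$ and $e^{\rho^{*}/r_{\text{s}}}$ become complex, and it is not a priori clear how to land back on real $(T,X)$ coordinates. One must settle on the right prescription --- whether to take $|e^{\rho^{*}/r_{\text{s}}}|$, to restrict $z$ to a contour on which $\text{Im}\,\rho^{*}$ supplies exactly the phase needed to keep $T^{2}-X^{2}$ real, or to complexify $(T,X)$ and project --- and a related subtlety is that $t$ no longer appears explicitly, so the two real degrees of freedom of the gulf must be carried entirely by $\text{Re}\,z$ and $\text{Im}\,z$. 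Compounding this, $z^{1-1/\zeta}$ and $\,_{2}F_{1}$ each carry their own branch cut, so $\Phi$ is likely multi-valued, and isolating a fundamental domain on which $\Phi$ is a bijection onto Region (VI) will be delicate. Finally, even a clean analytic map would leave the genuinely conceptual content of the conjecture --- that Region (VI) is the image of a physically meaningful ``transit'' of infalling trajectories across $\rho=r_{\text{s}}$, as suggested in Sec.~\ref{subsec:Features} --- outside the reach of a purely computational argument; supplying geometric or dynamical meaning to the complex-$z$ continuation is where I expect the real work, and the real risk, to lie.
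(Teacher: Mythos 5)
You should first note that the paper itself offers \emph{no proof} of this statement: it is explicitly presented as Conjecture \ref{conj:my-conjecture}, an open problem flagged as ``a topic worthwhile of future research.'' The only supporting material in the paper is the heuristic discussion of Sec.~\ref{subsec:A-conjecture} --- the factorization $\rho^{*}(z)=e^{-i(1-\frac{1}{\zeta})\arg z}\tilde{\rho}(z)$, the identification of branch points at $z=0$ and $z=1$, and the observation that the finite gap $\pi r_{\text{s}}/\sin(\pi/\zeta)$ of Eq.~\eqref{eq:diff-rho} separates the two hyperbolae of Eq.~\eqref{eq:hyperbolae}. Your steps (i), (ii), and (iv) essentially reconstruct exactly this motivating material, which is useful but does not go beyond what the paper already contains.

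The genuine gap is the one you yourself identify in step (iii), and it is not a technical loose end but the entire mathematical content of the claim: the word ``represents'' is never given a precise meaning. For $\text{Im}\,z\neq0$ the quantity $\rho^{*}$ is complex, so $e^{\rho^{*}/r_{\text{s}}}$ is complex, and the KS relations \eqref{eq:T-X-1}--\eqref{eq:T-X-2} simply do not define a point of the real $(T,X)$-plane; moreover $t$ drops out of your proposed map, whereas Region (VI) is a genuinely two-dimensional open set swept by both $T$ and $X$, and there is no argument that $(\text{Re}\,z,\text{Im}\,z)$ can take over both roles. Until you commit to a specific prescription (a modulus, a projection, a particular contour, or a complexification of $(T,X)$ itself) and then verify surjectivity onto the open region $|T^{2}-X^{2}|<e^{-\pi/\sin(\pi/\zeta)}$ together with the correct boundary values on $z\in\mathbb{R}^{\pm}$, nothing has been proven. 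As written, your submission is a research plan with honestly flagged obstacles --- a reasonable one, and consistent in spirit with the paper's own speculation --- but it does not constitute a proof, and the conjecture remains open.
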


\section{\label{sec:Summary}Summary and outlooks}

\textcolor{black}{Lemma \ref{lem:lem-final} in Sec. \ref{subsec:The-special-Buchdahl-inspired}
is the central result of our work,\linebreak finalizing the program
that Hans A. Buchdahl pioneered -- but prematurely abandoned --
circa 1962 \citep{Buchdahl-1962}. It presents an asymptotically flat
non-Schwarzschild spacetime in exact closed analytical form, which
we reproduce here for the reader's convenience \vskip-10pt}

\small
\begin{equation}
\left|1-\frac{r_{\text{s}}}{\rho}\right|^{\frac{k}{r_{\text{s}}}}\biggl\{-\Bigl(1-\frac{r_{\text{s}}}{\rho}\Bigr)dt^{2}+\frac{r^{4}(\rho)\,d\rho^{2}}{\rho^{4}\Bigl(1-\frac{r_{\text{s}}}{\rho}\Bigr)}+r^{2}(\rho)d\Omega^{2}\biggr\}\label{eq:metric-summary-1}
\end{equation}
\normalsize The areal coordinate $r$ is related to the radial coordinate
$\rho$ per \small
\begin{align}
r(\rho) & :=\frac{\zeta\,r_{\text{s}}\left|1-\frac{r_{\text{s}}}{\rho}\right|^{\frac{1}{2}(\zeta-1)}}{\left|1-\text{sgn}\left(1-\frac{r_{\text{s}}}{\rho}\right)\left|1-\frac{r_{\text{s}}}{\rho}\right|^{\zeta}\right|}\label{eq:metric-summary-2}
\end{align}
\normalsize with $\zeta:=\sqrt{1+3k^{2}/r_{\text{s}}^{2}}$ (we have
restored $k:=\tilde{k}\,r_{\text{s}}$).\vskip6pt

The \emph{special} Buchdahl-inspired metric is a member of the branch
of non-trivial solutions, viz. the class of Buchdahl-inspired metrics
with $\Lambda\in\mathbb{R}$, obtained in our preceding work for pure
$\mathcal{R}^{2}$ gravity \citep{Nguyen-2022-Buchdahl}; also see
Eqs. \eqref{eq:B-metric-1}--\eqref{eq:B-metric-4} in this current
paper. Fig. \vref{fig:Buchdahl-inspired-metric-family} summarizes
the state of affairs: the generic Buchdahl-inspired metric with $\Lambda\in\mathbb{R}$
supersedes the Schwarzschild-de Sitter metric and the \emph{special}
Buchdahl-inspired metric supersedes the Schwarzschild metric. Both
of the superseding instants occur when the Buchdahl parameter $k$
is sent to zero. \footnote{In comparison, the L\"u-Perkins-Pope-Stelle solution in Einstein-Weyl
gravity is a second branch \emph{separate} from the Schwarzschild
branch \citep{Lu-2015-a,Lu-2015-b}.}

\subsubsection{Higher-derivative characteristic}

The asymptotically flat $\mathcal{R}^{2}$ spacetime, described by
metric \eqref{eq:metric-summary-1}--\eqref{eq:metric-summary-2},
is characterized by a ``Schwarzschild'' radius $r_{\text{s}}$ and
the Buchdahl parameter $k$, the latter of which stems from the higher-order
nature of the quadratic theory. If $\mathcal{R}^{2}$ spacetime structures
shall eventually have been proven to be stable \citep{Goldstein-2017,Rinaldi-2020,Held-2021},
then the Buchdahl parameter $k$ would represent new higher-derivative
characteristic in addition to the mass of the source (encoded by $r_{\text{s}}$)
\footnote{The angular momentum and electric charge of the source are not active
in our consideration here.}.\vskip4pt

Furthermore, being a signature of higher-order theory, the Buchdahl
parameter $k$ should leave its footprints in higher-derivative gravity
at large. In the companion paper \citep{Nguyen-2022-extension}, we
confirm this intuition by carrying the concept of a Buchdahl parameter
over to the quadratic action $\mathcal{R}^{2}+\gamma\left(\mathcal{R}-2\Lambda\right)$;
therein we found a new vacuo which depends on $k$ as a perturbative
parameter. The Buchdahl parameter therefore should be a generic universal
hallmark of several modified theories of gravity.

\subsubsection{Relevance of the metric}

\textcolor{black}{A metric that is merely Ricci-scalar-flat is an
automatic }\textcolor{black}{\emph{trivial}}\textcolor{black}{{} solution
to the pure $\mathcal{R}^{2}$ vacuo field equation. Such as metric
is under-determined, though, as it is subject to only }\textcolor{black}{\emph{one}}\textcolor{black}{{}
single constraint, viz. $\mathcal{R}=0$, which is not sufficient
to determine the full $g_{\mu\nu}$ metric. Examples of null-Ricci-scalar
metrics hence are in abundance; some are given, e.g., in \citep{Shankaranarayanan-2018}.\vskip4pt}

\textcolor{black}{Yet, despite its null Ricci scalar}, the \emph{special}
Buchdahl-inspired metric \eqref{eq:metric-summary-1}--\eqref{eq:metric-summary-2}
acquires its structure by being a member of the class of \emph{non-trivial}
solutions, the Buchdahl-inspired metrics given in Eqs. \eqref{eq:B-metric-1}--\eqref{eq:B-metric-4}.
The Venn diagrams in Fig. \vref{fig:Buchdahl-inspired-metric-family}
depict the relations among the various metrics in question.\vskip4pt

\textcolor{black}{The }\textcolor{black}{\emph{special}}\textcolor{black}{{}
Buchdahl-inspired metric describes asymptotically flat spacetimes,
a situation with theoretical appeal in and of itself. Yet it remains
of relevance for asymptotically constant spacetimes in general. For
a generic $\Lambda\neq0$, in the range of $r\ll\left|\Lambda\right|^{-\frac{1}{2}}$,
the $\Lambda\,r^{2}$ term in the evolution rule \eqref{eq:B-metric-3}
would be suppressed. This means that the }\textcolor{black}{\emph{special}}\textcolor{black}{{}
Buchdahl-inspired metric still works well }\textcolor{black}{\emph{deep
inside the bulk}}\textcolor{black}{{} for a generic Buchdahl-inspired
spacetime with $\Lambda\neq0$. That is to say, in all practical situations,
pure $\mathcal{R}^{2}$ structures (whether they live on an asymptotically
flat or an asymptotically constant background) are well described
by }metric \eqref{eq:metric-summary-1}--\eqref{eq:metric-summary-2},
and the anomalous properties of $\mathcal{R}^{2}$ spacetime, discovered
herein and summarized below, remain valid as long as $\left|\Lambda\,r_{\text{s}}^{2}\right|\ll1$\textcolor{black}{.\vskip4pt}

\textcolor{black}{Asymptotically flat non-Schwarzschild solutions
that are non-trivial (in the sense of not being under-determined)
in modified gravity are a rare bread. An intriguing example is the
L\"u-Perkins-Pope-Stelle solution in Einstein-Weyl gravity \citep{Lu-2015-a,Lu-2015-b}.
In \citep{Kalita-2019} Kalita and Mukhopadhyay also reported numerical
indications of an asymptotically flat vacuo for an $f(\mathcal{R})$
theory with the Einstein-Hilbert $\mathcal{R}$ being the leading
term. The }\textcolor{black}{\emph{special}}\textcolor{black}{{} Buchdahl-inspired
metric, found in our current paper, is a newest member of this scant
roster.}

\subsubsection{Anomalous behavior in the surface area of the interior-exterior boundary}

Equipped with the exact analytical solution \eqref{eq:metric-summary-1}--\eqref{eq:metric-summary-2},
we then examined asymptotically flat $\mathcal{R}^{2}$ spacetime
structures. We found that, except for $k=0$, the areal radius $r(\rho)$
shrinks to zero at the interior-exterior boundary. See Sec. \ref{subsec:Areal-coordinate}.\vskip4pt

Crucially, we also found that the surface area of the interior-exterior
boundary, by including the conformal factor $\left|1-\frac{r_{\text{s}}}{\rho}\right|^{\frac{k}{r_{\text{s}}}}$,
vanishes for $k\in(-\infty,-r_{\text{s}})\cup(0,+\infty)$, diverges
for $k\in(-r_{\text{s}},0)$, equal $4\pi r_{\text{s}}^{2}$ for $k=0$,
and equal $16\pi r_{\text{s}}^{2}$ for $k=-r_{\text{s}}$. See Sec.
\ref{subsec:Surface-area}.\vskip4pt

\textcolor{black}{At the same time, the Kretschmann invariant exhibits
curvature singularities on the interior-exterior boundary provided
that $k\neq0$ and $k\neq-r_{\text{s}}$. The usual singularity the
origin persists, but it gets modified in the presence of $k$. See
Sec. \ref{subsec:Kretschmann}.\vskip4pt}

Taken altogether, these anomalous properties of the interior-exterior
boundary suggest that the topology of $\mathcal{R}^{2}$ spacetimes
undergo fundamental changes around mass sources.

\subsubsection{A \textquotedblleft virtual\textquotedblright{} region in the $\zeta-$Kruskal-Szekeres
diagram}

We proceeded by analytically construct the KS diagram for metric \eqref{eq:metric-summary-1}--\eqref{eq:metric-summary-2}.
The techniques developed for the regular KS diagram \citep{Szekeres-1960,Kruskal-1960,Eddington-1924,Finkelstein-1958}
are extendable to the\linebreak case at hand. We employed them to
design the $\zeta-$tortoise coordinate, the $\zeta-$Eddington-Finkelstein
coordinates, and the $\zeta-$Kruskal-Szekeres coordinates, accordingly.
\vskip4pt

The $\zeta-$tortoise coordinate $\rho^{*}$ is expressible in terms
of a Gaussian hypergeometric function. We found modifications in the
``flip over'' phenomenon of light cones across the interior-exterior
boundary. \textcolor{black}{See Secs. \ref{subsec:Tortoise} and \ref{subsec:Behavior-of-lightcones}.\vskip4pt}

The $\zeta-$KS diagram is shown in Fig. \vref{fig:KS-diagram}. The
$\zeta-$KS plane is conformally flat. The causal structure of the
regular KS diagram remains intact in the $\zeta-$KS diagram. \textcolor{black}{In
the interior, null and timelike geodesics will eventually hit the
origin; namely, no physical objects can escape the interior. In the
exterior, outgoing light paths can escape to infinity, whereas incoming
light paths must fall into the interior. See Sec. \ref{subsec:KS-diagram}.
\vskip4pt}

Yet there emerges a very surprising feature in the $\zeta-$KS diagram.
Sandwiching between the four known quadrants (I)--(IV) is an ``virtual''
domain which cannot be mapped to any region in the original manifold
specified by $(t,\rho,\theta,\phi)$. Transits of physical objects
from the exterior into the interior must bypass this ``gulf'' unaffected,
at least at the classical level.\vskip4pt

Given that the $\zeta-$KS diagram is the maximal extension of metric
\eqref{eq:metric-summary-1}--\eqref{eq:metric-summary-2}, the ``gulf''
that emerges is a tantalizing aspect, deserving further investigation.
We put forth a conjecture that the ``virtual gulf'' could be accounted
for by embedding the $\zeta-$tortoise coordinate into the complex
plane. See our Conjecture \ref{conj:my-conjecture}.

\subsubsection{Questioning the validity of techniques based on series expansions
around the interior-exterior boundary}

The \emph{non-analyticity} of the \emph{special} Buchdahl-inspired
metric across the interior-exterior boundary is self-evident in the
singularities of the Kretschmann scalar, the anomalous properties
of the surface area of the interior-exterior boundary, and the appearance
of a ``virtual gulf'' in the $\zeta-$KS plane. This metric therefore
\emph{cannot} be attained by any technique that is based on an analytic
perturbative expansion around the interior-exterior boundary.\vskip4pt

In a larger context, for the \emph{full} quadratic gravity, viz. $\gamma\,\mathcal{R}+\beta\,\mathcal{R}^{2}-\alpha\,\mathcal{C}^{\mu\nu\rho\sigma}\mathcal{C}_{\mu\nu\rho\sigma}$,
as the generalized Lichnerowicz theorem has been evaded, one must
restore the $\mathcal{R}^{2}$ term, namely, permitting $\beta\neq0$;
see \citep{Nguyen-2022-extension}. Solutions with non-analytic behaviors
across the interior-exterior boundary should be possible. At the very
least, the limit of $\alpha=\gamma=0$ must recover the \emph{special}
Buchdahl-inspired metric together with its anomalies. The L\"u-Perkins-Pope-Stelle
ansatz made in \citep{Lu-2015-a,Lu-2015-b} would need augmenting
with non-analytic built-ins in order to find these solutions in the
full quadratic action. See our companion paper for discussions \citep{Nguyen-2022-extension}.

\subsubsection{\label{subsec:Non-Schwarzschild-structures}Non-Schwarzschild structures
in pure $\mathcal{R}^{2}$ spacetime}

The divergence of the Kretschmann invariant at the interior-exterior
boundary, $\rho=r_{\text{s}}$, for $k\neq0$ and $k\neq-r_{\text{s}}$
signals the formation of a naked singularity or a wormhole. Given
that pure $\mathcal{R}^{2}$ gravity is equivalent to a scalar-tensor
theory, it would be natural to consider the special Buchdahl-inspired
metric in conjunction with exact solutions in Brans-Dicke gravity,
viz. the Brans and Campanelli-Lousto solutions which are known to
possess naked singularities or wormholes, depending on the value of
the Brans-Dicke parameter \citep{Agnese-1995,Vanzo-2012,Brans-1962,Campanelli-1993,Faraoni-2016}.
The no-hair theorem first proved by Hawking \citep{Hawking-1972-BD}
and later generalized by Sotiriou and Faraoni \citep{SotiriouFaraoni-2012}
for scalar-tensor gravity should also be taken in account. We plan
to investigate this direction in future research.

\textcolor{black}{\vskip8pt}
\begin{center}
-----------------$\infty$-----------------\vskip8pt
\par\end{center}

What is surprising is that pure $\mathcal{R}^{2}$ gravity is a parsimonious
theory, containing only one single term in the action \footnote{Besides its parsimony, virtues of this theory are in being ghost-free
and scale invariant \citep{Stelle-1977,Stelle-1978,Luest-2015-fluxes}.}. It does not involve exogenous terms, torsion, non-metricity, metric-affine
hybrid, or non-locality \citep{Clifton-2011,Sotiriou-2008,deFelice-2010,Capozziello-2011}.
It operates within the vanilla local metric-based formalism.\linebreak
Yet, despite its simplicity, it already produces novel behaviors,
reported herein, that are yet encountered in the Einstein-Hilbert
theory. Moreover, pure $\mathcal{R}^{2}$ gravity admits the Buchdahl-inspired
vacua with \emph{non-constant} scalar curvature, per Eq. \eqref{eq:B-metric-4}.
The asymptotic scalar curvature $4\Lambda$ and the Buchdahl parameter
$k$ are two \emph{endogenous} degrees of freedom that are only accessible
in a fourth-order theory, as opposed to a second-order theory such
as the Einstein-Hilbert action.\vskip6pt

It is the Buchdahl parameter $k$ that enriches $\mathcal{R}^{2}$
gravity with phenomenology which transcends the Einstein-Hilbert paradigm.

\section{Closing words}

In this second installment of our three-paper ``Beyond Schwarzschild--de
Sitter spacetimes'' series \citep{Nguyen-2022-Buchdahl,Nguyen-2022-extension},
we reported an exact closed analytical solution that serves as a bona
fide enlargement of the Schwarzschild solution. It encloses the Schwarzschild
spacetime as a limiting case (when the Buchdahl parameter $k$ is
sent to zero). We achieved this result by advancing an unfinished
program in search of pure $\mathcal{R}^{2}$ vacua, a program that
was originated but ``forsaken'' by Buchdahl circa 1962, and largely
``forgotten'' by the gravitation research community in the past
sixty years \citep{Buchdahl-1962}. Novel intriguing theoretical properties
of $\mathcal{R}^{2}$ spacetime structures are uncovered and reported
herein, suggesting that the Buchdahl-inspired spacetimes may fall
outside of the Einstein-Hilbert paradigm. They may well belong to
a separate Buchdahl-inspired framework, warranting further explorations.
\begin{acknowledgments}
I thank the anonymous referee for their important comments in improving
the paper and stimulating further developments, especially regarding
the non-Schwarzschild $\mathcal{R}^{2}$ structures. I thank Dieter
L\"ust for his encouragement during the development of this research.
The anonymous referee of my previous paper \citep{Nguyen-2022-Buchdahl}
motivated me to strengthen the capacity of my work in evading the
generalized Lichnerowicz theorem \citep{Nelson-2010,Lu-2015-a,Lu-2015-b,Luest-2015-backholes}.
The valuable help and technical insights from Richard Shurtleff are
acknowledged. I thank Tiberiu Harko for his supports, Sergei Odintsov
and Timothy Clifton for their comments.
\end{acknowledgments}

\begin{center}
-----------------$\infty$-----------------
\par\end{center}

\appendix

\section{\label{sec:Case-rS0}$\ $The case of $\boldsymbol{r_{\text{s}}=0}$}

From Lemma \ref{lem:lemma-1} and Corollary \ref{cor:first-transf},
we have
\begin{align}
q_{\pm} & =\frac{\sqrt{3}}{2}\left|k\right|\\
r & =\left|q^{2}-\frac{3}{4}k^{2}\right|^{\frac{1}{2}}\\
p & =\text{sgn}\Bigl(q^{2}-\frac{3}{4}k^{2}\Bigr)\frac{\left|q^{2}-\frac{3}{4}k^{2}\right|^{\frac{1}{2}}}{q}\\
\frac{p\,q}{r} & =\text{sgn}\Bigl(q^{2}-\frac{3}{4}k^{2}\Bigr)\\
e^{\omega} & =\left|\frac{q-\frac{\sqrt{3}}{2}\left|k\right|}{q+\frac{\sqrt{3}}{2}\left|k\right|}\right|^{\frac{2}{\sqrt{3}}\text{sgn}\left(k\right)}
\end{align}
The metric is thus\small
\begin{align}
ds^{2} & =\left|\frac{q-\frac{\sqrt{3}}{2}\left|k\right|}{q+\frac{\sqrt{3}}{2}\left|k\right|}\right|^{\frac{2}{\sqrt{3}}\text{sgn}\left(k\right)}\times\nonumber \\
 & \ \left\{ \text{sgn}\Bigl(q^{2}-\frac{3}{4}k^{2}\Bigr)\left[-dt^{2}+dq^{2}\right]+\left|q^{2}-\frac{3}{4}k^{2}\right|d\Omega^{2}\right\} 
\end{align}
\normalsize

\section{\label{sec:Gaussian-hypergeometric-function}$\ $Gaussian hypergeometric
function}

The Gaussian hypergeometric function involved in the $\zeta-$tortoise
coordinate, $\,_{2}F_{1}(a,b;c;z)$ in terms of series
\begin{equation}
\,_{2}F_{1}(a,b;c;z)=1+\frac{ab}{c.1!}z+\frac{a(a+1)b(b+1)}{c(c+1).2!}z^{2}+\dots
\end{equation}
Generally speaking, this series converges in the unit circle $\left|z\right|<1$.
For the $\zeta-$tortoise coordinate (modulo an additive constant)\vskip-10pt

\small
\begin{align}
\rho^{*} & =\frac{\zeta^{2}r_{\text{s}}}{\zeta-1}\,z^{1-\frac{1}{\zeta}}\,_{2}F_{1}\left(2,1-\frac{1}{\zeta};2-\frac{1}{\zeta};\text{sgn}\Bigl(1-\frac{r_{\text{s}}}{\rho}\Bigr)z\right)
\end{align}
\normalsize in which $z:=\left|1-\frac{r_{\text{s}}}{\rho}\right|^{\zeta}$,
or equivalently, $\rho>r_{\text{s}}/2$ (note that $\zeta:=\sqrt{1+3\tilde{k}^{2}}>1$
for $\tilde{k}\neq0$).

For $0<\rho<r_{\text{s}}/2$, in order to continue using a hypergeometric
function defined via a series, we would need to ``invert'' the variable
$z$. Recall the ODE for $\rho^{*}$ (for $\rho<r_{\text{s}}$):
\begin{equation}
d\rho^{*}=+\zeta r_{\text{s}}\frac{z^{-1/\zeta}}{(1+z)^{2}}dz
\end{equation}
Let us substitute $z:=y^{-1}$, then
\begin{equation}
d\rho^{*}=-\zeta\frac{y^{1/\zeta}}{(1+y)^{2}}dy
\end{equation}
accepting the solution (modulo an additive constant)
\begin{align}
\rho^{*} & =-\frac{\zeta{}^{2}r_{\text{s}}}{\zeta+1}\,z^{-1-\frac{1}{\zeta}}\,_{2}F_{1}\left(2,1+\frac{1}{\zeta};2+\frac{1}{\zeta};-z^{-1}\right)
\end{align}
which converges for $\rho<r_{\text{s}}$. Note that its is nothing
but the original solution with $\zeta$ replaced by $-\zeta$ (including
the $\zeta$ in the definition of $z$).

\section{\label{sec:Limit-k-0}$\ $The $\boldsymbol{k\rightarrow0}$ limit
of the $\boldsymbol{\zeta-}$tortoise coordinate}

In the limit of $k\rightarrow0$, viz. $\zeta\rightarrow1$:\small
\begin{equation}
\left|z\right|^{1-\frac{1}{\zeta}}=1+\ln\left|z\right|\Bigl(1-\frac{1}{\zeta}\Bigr)+\mathcal{O}\left(\Bigl(1-\frac{1}{\zeta}\Bigr)^{2}\right)
\end{equation}
and
\begin{align}
 & \frac{\zeta}{\zeta-1}\,_{2}F_{1}\left(2,1-\frac{1}{\zeta};2-\frac{1}{\zeta};z\right)\nonumber \\
 & \ \ \ =\frac{1}{1-\frac{1}{\zeta}}\sum_{n=0}^{\infty}\frac{(n+1)\Bigl(1-\frac{1}{\zeta}\Bigr)}{n+1-\frac{1}{\zeta}}z^{n}\\
 & \ \ \ =\frac{1}{1-\frac{1}{\zeta}}+\sum_{n=1}^{\infty}\frac{n+1}{n+1-\frac{1}{\zeta}}z^{n}\\
 & \ \ \ =\frac{1}{\zeta-1}+\left(1+\sum_{n=1}^{\infty}z^{n}\right)+\frac{1}{\zeta}\sum_{n=1}^{\infty}\frac{z^{n}}{n+1-\frac{1}{\zeta}}\\
 & \ \ \ =\frac{1}{\zeta-1}+\frac{1}{1-z}+\frac{1}{\zeta}\sum_{n=1}^{\infty}\left[\frac{z^{n}}{n}+\mathcal{O}\Bigl(1-\frac{1}{\zeta}\Bigr)\right]\\
 & \ \ \ =\frac{1}{\zeta-1}+\frac{1}{1-z}-\frac{1}{\zeta}\ln\left|1-z\right|+\mathcal{O}\Bigl(1-\frac{1}{\zeta}\Bigr)
\end{align}
\normalsize Eq. \eqref{eq:tortoise-1} gives \small
\begin{align}
\frac{\rho^{*}}{r_{\text{s}}} & =\frac{\zeta}{\zeta-1}+\frac{\zeta}{1-z}-\ln\left|1-z\right|+\ln\left|z\right|+\mathcal{O}\Bigl(1-\frac{1}{\zeta}\Bigr)
\end{align}
\normalsize Note that $\rho^{*}$ was determined up to an additional
constant. In the limit $\zeta\rightarrow1$, we are thus left with
\small
\begin{equation}
\frac{\rho^{*}}{r_{\text{s}}}=\frac{1}{1-z}+\ln\left|1+\frac{1}{z-1}\right|
\end{equation}
\normalsize Taking into account Eq. \eqref{eq:def-z}, viz. $z=\text{sgn}\Bigl(1-\frac{r_{\text{s}}}{\rho}\Bigr)\times\left|1-\frac{r_{\text{s}}}{\rho}\right|^{\zeta}=1-\frac{r_{\text{s}}}{\rho}$
for $\zeta=1$, we finally have
\begin{equation}
\rho^{*}=\rho+r_{\text{s}}\ln\left|\frac{\rho-r_{\text{s}}}{r_{\text{s}}}\right|
\end{equation}
in agreement with the usual tortoise coordinate.

\end{document}